\newtheorem{thm}{Theorem}[section]
\newtheorem{defn}{Definition}[section]
\newtheorem{cor}{Corollary}[section]
\newtheorem{lem}{Lemma}[section]
\newtheorem{remark}{Remark}[section]
\title{Synchronization of Multi-Agent Systems With Heterogeneous Controllers}
\author{Anoop Jain and Debasish Ghose 
\thanks{A. Jain is a graduate student at the Guidance, Control and Decision System Laboratory (GCDSL) in the Department of
                 Aerospace Engineering, Indian Institute of Science,
                  Bangalore, India (email: anoopj@aero.iisc.ernet.in; anoopjain.jn@gmail.com).}
\thanks{D. Ghose is a Professor at the Guidance, Control and Decision System Laboratory (GCDSL) in the Department of
              Aerospace Engineering, Indian Institute of
              Science, Bangalore, India (email: dghose@aero.iisc.ernet.in).}}
\begin{document}
\maketitle


\begin{abstract}
This paper studies the synchronization of a multi-agent system where the agents are coupled through heterogeneous controller gains. Synchronization refers to the situation where all the agents in a group have a common velocity direction. We generalize existing results and show that by using heterogeneous controller gains, the final velocity direction at which the system of agents synchronize can be controlled. The effect of heterogeneous gains on the reachable set of this final velocity direction is further analyzed. We also show that for realistic systems, a limited control force to stabilize the agents to the synchronized condition can be achieved by confining these heterogeneous controller gains to an upper bound. Simulations are given to support the theoretical findings.
\end{abstract}

\begin{IEEEkeywords}
Synchronization; heterogeneous controller gains; multi-agent systems; stabilization; reachable velocity direction
\end{IEEEkeywords}


\section{Introduction}
The phenomenon of collective synchronization in a network of coupled oscillators has attracted the attention of many researchers from different disciplines of science and engineering. Particularly, in the field of control engineering, rendezvous, consensus, and formation of multi-agent systems \cite{Beard2008}, \cite{Mesbhai2010}, are some variants of this fascinating phenomenon. In this paper, we study synchronization of a multi-agent system, which is achieved when, at all times, the agents and their position centroid, have a common velocity direction. Complementary to synchronization is the phenomenon of balancing, which refers to the situation in which agents move in such a way that their position centroid remains fixed. In this paper, only the problem of synchronization is considered.

Recently, the important insights in understanding the phenomenon of synchronization have come from the study of the Kuramoto model \cite{Strogatz2000}. This model is widely studied in the literature in the context of achieving synchronization and balancing in multi-agent systems. For instance in \cite{Sepulchre2007}, Kuramoto model type steering control law is derived to stabilize synchronized and balanced formations in a group of agents. The proposed control law in \cite{Sepulchre2007} operates with homogeneous controller gains, which gives rise to average consensus in the initial heading angles of the agents. Recently, the effect of heterogeneity in various aspects have been studied in the literature. For example, \cite{Seyboth2014} considers heterogeneous velocities of the agents. In a similar spirit, in this paper, we consider that the controller gains are heterogeneously distributed, that is, they are not necessarily the same for each agent, and can be deterministically varied. It will be shown that this type of heterogeneity in the controller gains also leads to a synchronized formation, in which a desired final velocity direction can be obtained by a proper selection of gains. Some preliminary results on this problem have been earlier obtained in \cite{Jain2013}.

The motivation to study synchronization under heterogeneous controller gains is twofold. First, in many engineering applications related to aerial and underwater vehicles, it is required that the vehicles move in a formation. For example, formations of Unmanned Aerial Vehicles (UAVs), Autonomous Underwater Vehicles (AUVs), and wheeled mobile robots are widely used for tracking, search, surveillance, oceanic explorations, etc. In addition to the basic requirement of maintaining formation in terms of connectivity preservation \cite{hasan2015}, one can utilize heterogeneity in the controller gains so that the formation of these vehicles can be made to move in a desired direction, thus helping to explore an area of interest more effectively. Secondly, while implementing the control law physically for the homogeneous gains case, it is impossible to get identical controller gain for each agent. Thus, some error in the individual controller gains is inevitable, leading to heterogeneity in the controller gains. It would be useful to know the effect of this heterogeneity on the synchronization performance of the multi-agent system.

Synchronization and its various aspects are widely studied in the literature. In \cite{Dong2015}, finite-time phase-frequency synchronization of Kuramoto oscillators is discussed. To achieve this finite-time convergence, the Kuramoto model is modified as a normalized and signed gradient system. A generalization of Kuramoto model in which the oscillators are coupled by both positive and negative coupling strength is given in \cite{Hong2011}. It is shown that the oscillators with positive coupling are attracted to the mean field and tend to synchronize with it. While oscillators with negative coupling are repelled by the mean field and prefer a phase diametrically opposed to it. In order to achieve complete phase and frequency synchronization of the Kuramoto model, Jadbabaie et al. in \cite{Jadbabaie2004} show that there is a critical value of the coupling below which a totally synchronized state does not exist. Chopra and Spong in \cite{Chopra2005} and \cite{Chopra2009} provide an improved bound on the coupling parameter to ensure exponential synchronization of the natural frequencies of all oscillators to the mean natural frequency of the group. Similar results are given in Ha et al. \cite{Ha2010} who provide sufficient conditions for the initial configurations of oscillators toward their complete synchronization. In \cite{Dorfler2011}, various bounds on the critical coupling strength for synchronization in the Kuramoto model, are presented. The problems of cooperative uniform and exponential synchronization in multi-agent systems are discussed in \cite{Monshizadeh2015} and \cite{Ma2015}, respectively. A survey on synchronization in complex networks of phase oscillators is presented in \cite{Dorfler2014}. In \cite{Sinha2006}, heterogeneous controller gains have been used in a cyclic pursuit framework to obtain desired meeting points (rendezvous) and directions. Moreover, the idea of dynamically adjustable control gains have been used in \cite{Ding2012} to study the pursuit formation of multiple autonomous agents.

The content of the rest of this paper is organized as follows. In Section II, we describe the dynamics of the system and formulate the problem. In Section III, we analyze the effect of heterogeneous controller gains on the final velocity direction at which the system of agents synchronize. In Section IV, by deriving a less restrictive condition on the heterogeneous gains for a special case of two agents, we show that the reachable set of the final velocity direction further expands. In order to model realistic systems,
a bound on the control force is obtained in Section V. Finally, we conclude the paper by summarizing the main results in Section VI.

\section{System Description and Problem Formulation}
\subsection{System Model}
Consider a multi-agent system composed of $N$ agents in which each agent, assumed to have unit mass, moves at unit speed in a $2-$dimensional plane. We identify this $2-$dimensional plane $\mathbb{R}^2$ with the complex plane $\mathbb{C}$ and use complex variables to describe the position and velocity of each agent. For $k = 1, \ldots, N$, the position of the $k^\text{th}$ agent is $r_k = x_k + iy_k\in \mathbb{C}$, while the velocity of the $k^\text{th}$ agent is $\dot{r}_k = e^{i\theta_k} = \cos\theta_k + i\sin\theta_k \in \mathbb{C}$, where, $\theta_k$ is the orientation of the (unit) velocity vector of the $k^\text{th}$agent from the real axis, and $i = \sqrt{-1}$ denotes the standard complex number. The orientation, $\theta_k$ of the velocity vector, which is also referred to as the phase of the $k^\text{th}$ agent \cite{Strogatz2000}, represents a point on the unit circle $S^1$. With these notations, the equations of motion of the $k^\text{th}$ agent are
\begin{subequations}\label{modelNew}
\begin{align}
\label{modelNew1}\dot{r}_k & = e^{i\theta_k}\\
\label{modelNew2} \dot{\theta}_k & =  u_k; ~~~ k = 1, \ldots, N,
\end{align}
\end{subequations}
where, $u_k \in \mathbb{R}$ is the feedback control law, which controls the angular rate of the $k^\text{th}$ agent. If, $\forall k$, the control law $u_k$ is identically zero, then each agent travels at constant unit speed in a straight line in its initial direction $\theta_k(0)$ and its motion is decoupled from the other agents. If, $\forall k$, the control input $u_k = \omega_0$ is constant and non zero then each agent rotates on a circle of radius $|\omega_0|^{-1}$. The convention of direction of rotation on the circle followed in this paper is, if $\omega_0 > 0~(\omega_0 < 0)$, then all the agents rotate in the anticlockwise (clockwise) direction.

Note that the agent's dynamics given by \eqref{modelNew} describe a unicycle model, and is widely studied in the literature \cite{Lin2005}$-$\cite{Egerstedt2001}. Furthermore, the control algorithms proposed in this paper are decentralized, and there is no centralized information available to the agents which lead them to synchronize at a desired velocity direction. Only the heterogeneity in the controller gains is a mean to steer the agents towards synchronization at a desired velocity direction. Also, this paper does not deal with issue of collision avoidance among agents.

\subsection{Notations}
We introduce a few additional notations, which are used in this paper. We use the bold face letters $\pmb{r} = (r_1, \ldots, r_N)^T \in \mathbb{C}^N$, $\pmb{\theta} = (\theta_1, \ldots, \theta_N)^T \in \mathbb{T}^N$, where $\mathbb{T}^N$ is the $N$-torus, which is equal to $S^1 \times \ldots \times S^1$ ($N$-times) to represent the vectors of length $N$ for the agent's positions and heading angles, respectively. Next, we define the inner product $\left<z_1, z_2\right>$ of the two complex numbers $z_1, z_2 \in \mathbb{C}$ as $\left<z_1, z_2\right> = \textrm{Re}(\bar{z}_1z_2)$, where $\bar{z}_1$ represents the complex conjugate of $z_1$. For vectors, we use the analogous boldface
notation $\left<\pmb{w}, \pmb{z}\right> = \textrm{Re}(\pmb{w^*}\pmb{z})$ for $\pmb{w}, \pmb{z} \in \mathbb{C}^N$, where $\pmb{w^*}$ denotes the conjugate transpose of $\pmb{w}$. The norm of $\pmb{z} \in \mathbb{C}^N$ is defined as $\|\pmb{z}\| = \left<\pmb{z}, \pmb{z}\right>^{1/2}$. The vectors $\pmb{0}$ and $\pmb{1}$ are used to represent by $\pmb{0} = (0,0,\ldots,0)^T \in \mathbb{R}^N$ and $\pmb{1} = (1,1,\ldots,1)^T \in \mathbb{R}^N$, respectively.

\subsection{Background}
At first, our main focus is to seek a feedback control $u_k$ for all $k$, such that the collective motion of all the agents is stabilized to a synchronized formation. The control over the average linear momentum of the group of agents is a mean to achieve synchronized formation. The average linear momentum $p_\theta$ of the group of agents, which is also referred to as the phase order parameter \cite{Strogatz2000}, is
\begin{equation}
\label{phase_order_parameter}p_\theta = \left|p_\theta\right|e^{i\Psi} = \frac{1}{N}\sum_{k=1}^{N}e^{i\theta_k}.
\end{equation}
Note that since $\theta_k, \forall k$, is a function of time, $p_\theta$ varies with time.

From \eqref{phase_order_parameter}, it is clear that the magnitude of $p_\theta$ satisfies $0\leqslant \left|{p_\theta}\right|\leqslant{1}$. In synchronized formation, the heading angles $\theta_k$, for all $k$, are the same and, hence, all the agents move in a common direction. It turns out that the phase arrangement $\pmb{\theta}$ is synchronized if $|p_\theta| = 1$.

We deal with two cases of interaction networks among agents: $i)$ all-to-all communication topology$-$ in which each agent can communicate with all other agents of the group. $ii)$ limited communication topology$-$ in which an agent can communicate with certain number of neighbors (which is also a more general case of $(i)$). We assume that limited communication topology among agents is undirected and time-invariant.

In order to achieve synchronization of the agents, the control $u_k,~\forall k$, is now proposed separately for both of these communication scenarios.

\subsubsection{All-to-All Communication Topology}
It is evident from the above discussion that the stabilization of synchronized formation can be accomplished by considering the following potential function
\begin{equation}
\label{potential function}U(\pmb{\theta}) = \frac{N}{2}\left(1 - |p_\theta|^2\right),
\end{equation}
which is minimized when $|p_\theta| = 1$, that is, when all the phases are identical (synchronized). Since $0 \leq |p_\theta| \leq 1$, the potential $U(\pmb{\theta})$ satisfies $0 \leq U(\pmb{\theta}) \leq N/2$.

Now, we state the following theorem, which says that a Lyapunov-based control framework exists to stabilize synchronized formation.

\begin{thm}\label{Theorem1}
Consider the system dynamics \eqref{modelNew} with the control law
\begin{equation}
\label{control1}u_k = K_k\left(\frac{\partial U}{\partial \theta_k}\right);~~K_k \neq 0,
\end{equation}
and define a term
\begin{equation}
\label{term}T_k(\pmb{\theta}) = \left(\frac{\partial U}{\partial \theta_k}\right)^2
\end{equation}
for all $k = 1, \ldots, N$. If $\sum_{k=1}^{N} K_k T_k(\pmb{\theta}) < 0$, all the agents asymptotically stabilize to a synchronized formation. Moreover, $K_k < 0$ for all $k$, is a restrictive sufficient condition in stabilizing synchronized formation.
\end{thm}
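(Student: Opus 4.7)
The strategy is to promote the collective potential $U(\pmb{\theta})$ of \eqref{potential function} to a Lyapunov function for the closed-loop system. Since $U$ depends on time only through $\pmb{\theta}$, the chain rule together with \eqref{modelNew2} gives
\begin{equation}
\dot{U} = \sum_{k=1}^{N}\frac{\partial U}{\partial\theta_k}\dot{\theta}_k = \sum_{k=1}^{N}\frac{\partial U}{\partial\theta_k}u_k,
\end{equation}
and substituting the proposed control \eqref{control1} together with the definition \eqref{term} of $T_k$ immediately yields the key identity $\dot{U} = \sum_{k=1}^{N} K_k T_k(\pmb{\theta})$. This identity is the backbone of everything that follows.

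Under the hypothesis $\sum_k K_k T_k(\pmb{\theta})<0$, read as $\dot{U}\leq 0$ with equality only where every $T_k$ vanishes, $U$ is non-increasing along trajectories and bounded below by $0$. LaSalle's invariance principle then forces every solution to converge to the largest invariant subset of $\{\pmb{\theta}:\dot{U}(\pmb{\theta})=0\}$. Since $T_k(\pmb{\theta})=0$ for all $k$ is equivalent to $\nabla U(\pmb{\theta})=\pmb{0}$, this invariant set is exactly the set of critical points of $U$. Because $U$ attains its global minimum precisely at the synchronized configuration $|p_\theta|=1$, while all other critical points correspond to non-synchronized arrangements with $|p_\theta|<1$, the remaining work is to conclude that the synchronized set is in fact the attractor of the closed-loop flow.

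The second assertion then follows quickly: if $K_k<0$ for every $k$, each summand $K_k T_k(\pmb{\theta})$ is non-positive because $T_k\geq 0$, so $\dot{U}\leq 0$ automatically, with equality if and only if every $T_k$ is zero. This is exactly the structure needed for the first part, and the closed-loop system reduces to a (negatively signed) gradient flow on $U$. The restrictiveness is evident: no cancellation between summands of opposite sign is permitted, whereas the main hypothesis $\sum_k K_k T_k<0$ is strictly weaker since it allows some $K_k>0$ provided the signed weighted sum of squared partials still stays negative.

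The step I expect to be the main obstacle is sharpening the LaSalle conclusion: the invariance principle only delivers convergence to the critical set of $U$, which contains non-synchronized configurations (the balanced-type critical points of $|p_\theta|$). To upgrade this to asymptotic stabilization at synchrony, one must argue that these additional equilibria are unstable — saddle or maximum type — so that they attract only a measure-zero set of initial conditions, while the synchronized set attracts an open dense subset of $\mathbb{T}^N$. In the restrictive case $K_k<0$ this is immediate from the gradient-flow structure, but for the more general mixed-sign hypothesis one must inspect the linearization at non-synchronized critical points (or exhibit a direction of decrease of $U$ nearby) to rule them out as attractors of the heterogeneously weighted flow.
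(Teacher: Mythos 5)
Your plan follows the same route as the paper: take $U(\pmb{\theta})$ as the Lyapunov candidate, compute $\dot{U}=\sum_{k}K_kT_k(\pmb{\theta})$, invoke LaSalle on the compact state space $\mathbb{T}^N$ to get convergence to the critical set of $U$, and then argue that only the synchronized critical points are attracting. You have correctly identified the one substantive obstacle — upgrading ``convergence to the critical set'' to ``convergence to synchrony'' — but you have not carried out that step, and your remark that it is ``immediate from the gradient-flow structure'' in the case $K_k<0$ is too quick. Gradient flows converge to critical points, but saddle points still have nonempty stable manifolds; to conclude that the synchronized set is the only stable equilibrium one must actually verify that every non-synchronized critical point is a maximum or a saddle. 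This is precisely the computation the paper supplies and your proposal omits.

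Concretely, the paper parametrizes the critical set by $\frac{\partial U}{\partial\theta_k}=-|p_\theta|\sin(\Psi-\theta_k)=0$, splits it into the global maxima ($p_\theta=0$) and, for $p_\theta\neq 0$, the configurations with $M$ agents at phase $(\Psi+\pi)\bmod 2\pi$ and $N-M$ at $\Psi\bmod 2\pi$. The case $M=0$ is the synchronized global minimum; for $1\leq M\leq N-1$ the Hessian takes the explicit form $H(\pmb{\theta})=\frac{1}{N}\pmb{w}\pmb{w}^T+|p_\theta|\,\mathrm{diag}(\pmb{w})$ with $\pmb{w}=[1,\ldots,1,-1,\ldots,-1]^T$, and indefiniteness is exhibited by the positive diagonal entry $(1/N)+|p_\theta|$ together with a vector $\pmb{q}$ satisfying $\pmb{w}^T\pmb{q}=0$ and $\pmb{q}^TH(\pmb{\theta})\pmb{q}=-2|p_\theta|<0$. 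Some such argument (or an equivalent exhibition of a descent direction of $U$ at each non-synchronized critical point) is needed to close your proof; without it the conclusion of the theorem does not follow from LaSalle alone. A second, more minor point: your reading of the hypothesis $\sum_kK_kT_k<0$ as ``$\dot{U}\leq 0$ with equality only on the critical set'' is a charitable reinterpretation (taken literally the strict inequality cannot hold at the synchronized state where all $T_k=0$), but the paper is equally loose on this point, so it is not a defect relative to the target proof.
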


\begin{proof}
Consider the potential function $U(\pmb{\theta})$, defined by \eqref{potential function}, the minimization of which leads to a synchronized formation. Since the magnitude of the average linear momentum $|p_\theta|$ in \eqref{phase_order_parameter} satisfies $0 \leqslant |p_\theta| \leqslant 1$, it ensures that $U(\pmb{\theta}) \geqslant 0$. Also, $U(\pmb{\theta}) = 0$ only at the equilibrium point where $|p_\theta| = 1$. Thus, $U(\pmb{\theta})$ is a Lyapunov function candidate \cite{Khalil2000}.

The time derivative of $U(\pmb{\theta})$ along the dynamics \eqref{modelNew} is
\begin{equation}
\label{U_dot}\dot{U}(\pmb{\theta}) = \sum_{k=1}^{N}\left(\frac{\partial U}{\partial \theta_k}\right) \dot{\theta}_k  = \sum_{k=1}^{N}\left(\frac{\partial U}{\partial \theta_k}\right) u_k.
\end{equation}

Using \eqref{control1} and \eqref{term}
\begin{equation}
\label{Udot}\dot{U}(\pmb{\theta}) = \sum_{k=1}^{N}K_k\left(\frac{\partial U}{\partial \theta_k}\right)^2 = \sum_{k=1}^{N}K_k T_k(\pmb{\theta}).
\end{equation}
It shows that $\dot{U}(\pmb{\theta}) < 0$, if $\sum_{k=1}^{N}K_k T_k(\pmb{\theta}) < 0$. According to the Lyapunov stability theorem \cite{Khalil2000}, all the solutions of \eqref{modelNew} with the control \eqref{control1} asymptotically stabilize to the equilibrium where $U(\pmb{\theta})$ attains its minimum value, that is, at $|p_\theta| = 1$ (synchronized formation).

The restricted sufficiency condition is proved next. Note that the term $T_k(\pmb{\theta}) \geq 0$ for all $k = 1, \ldots, N$, which ensures that $\dot{U}(\pmb{\theta}) \leq  0$ for $K_k < 0,~\forall k$. Moreover, $\dot{U}(\pmb{\theta}) = 0$ if and only if $({\partial U}/{\partial \theta_k}) = 0$, that is, on the critical set of $U(\pmb{\theta})$. The critical set of $U(\pmb{\theta})$ is the set of all $\pmb{\theta} \in \mathbb{T}^N$, for which $({\partial U}/{\partial \theta_k}) = 0,~\forall k$. Note that
\begin{align}
\nonumber \frac{\partial U}{\partial \theta_k} &= -\frac{N}{2}\frac{\partial}{\partial \theta_k}\left<p_\theta, p_\theta\right>\\
\nonumber &= -\frac{N}{2}\left(\left<p_\theta, \frac{\partial p_\theta}{\partial \theta_k}\right> + \left<\frac{\partial p_\theta}{\partial \theta_k}, p_\theta\right>\right)\\
\label{note_that}&= -\left<p_\theta, ie^{i\theta_k}\right>.
\end{align}
Since $\pmb{\theta} \in \mathbb{T}^N$ is compact, it follows from the LaSalle's invariance theorem \cite{Khalil2000}, all the solutions of \eqref{modelNew} under control \eqref{control1} converge to the largest invariant set contained in $\{\dot{U}(\pmb{\theta}) = 0\}$, that is, the set
\begin{equation}
\Lambda = \left\{\pmb{\theta}~|~ ({\partial U}/{\partial \theta_k}) = -\left<p_\theta, ie^{i\theta_k}\right> = 0,~\forall k\right\},
\end{equation}
which is the critical set of $U(\pmb{\theta})$. In this set, dynamics \eqref{modelNew2} reduces to $\dot{\theta}_k = 0, \forall k$, which implies that all the agents move in a straight line. The set $\Lambda$ is itself invariant since
\begin{align}
\nonumber \frac{d}{dt}\left<p_\theta, ie^{i\theta_k}\right> &= \left<p_\theta, \frac{d(ie^{i\theta_k})}{dt}\right> + \left<\frac{dp_\theta}{dt}, ie^{i\theta_k}\right>\\
&= -\left<p_\theta, e^{i\theta_k}\right>\dot{\theta}_k + \frac{1}{N}\left<\sum_{k=1}^{N} ie^{i\theta_k} \dot{\theta}_k, ie^{i\theta_k}\right> = 0
\end{align}
on this set. Therefore, all the trajectories of the system \eqref{modelNew} under control \eqref{control1} asymptotically converges to the critical set of  $U(\pmb{\theta})$. Moreover, the synchronized state characterizes the stable equilibria of the system \eqref{modelNew} in the critical set $\Lambda$ and the rest of the critical points are unstable equilibria, which is proved next.

{\it Analysis of the critical set}: The critical points of $U(\pmb{\theta})$ are given by the $N$ algebraic equations
\begin{equation}
\frac{\partial U}{\partial \theta_k} = -\left<p_\theta, ie^{i\theta_k}\right> = -|p_\theta|\sin(\Psi-\theta_k) = 0,~~1\leq k \leq N,
\end{equation}
where, $p_\theta = |p_\theta|e^{i\Psi}$, as defined in \eqref{phase_order_parameter}, has been used. Since the critical points with $p_\theta = 0$ are the global maxima of $U(\pmb{\theta})$, and hence unstable if $K_k < 0,~\forall k$.

Now, we focus on the critical points for which $p_\theta \neq 0$, and $\sin(\Psi - \theta_k) = 0, \forall k$. This implies that $\theta_k \in \{\Psi~\text{mod}~2\pi, (\Psi + \pi)~\text{mod}~2\pi\},~\forall k$. Let $\theta_k = (\Psi + \pi)~\text{mod}~2\pi$ for $k \in \{1,\ldots,M\}$, and $\theta_k = \Psi~\text{mod}~2\pi$ for $k \in \{M+1,\ldots, N\}$. The value $M=0$ defines synchronized state and corresponds to the global minimum of $U(\pmb{\theta})$. Therefore, the set of synchronized state is asymptotically stable if  $K_k < 0,~\forall k$. Every other value of $1 \leq M \leq N-1$ corresponds to the saddle point, and is, therefore, unstable for $K_k < 0,~\forall k$. This is proved below.

Let $H(\pmb{\theta}) = [h_{jk}(\pmb{\theta})]$ be the Hessian of $U(\pmb{\theta})$. Then, we can find the components $[h_{jk}(\pmb{\theta})]$ of $H(\pmb{\theta})$ by evaluating the second derivatives $\frac{\partial^2 U}{\partial\theta_j \partial\theta_k}$ for all pairs of $j$ and $k$, which yields
\[
    h_{jk}(\pmb{\theta})=
\begin{cases}
    \dfrac{1}{N} - \left<p_\theta, e^{i\theta_k}\right> = \dfrac{1}{N} -|p_\theta|\cos(\Psi-\theta_k), &  j = k\\
    \dfrac{1}{N}\left<e^{i\theta_j}, e^{i\theta_k}\right> = \dfrac{1}{N} \cos(\theta_j-\theta_k),      &  j \neq k.
\end{cases}
\]

Since $\theta_k = (\Psi + \pi)~\text{mod}~2\pi$ for $k \in \{1,\ldots,M\}$, and $\theta_k = \Psi~\text{mod}~2\pi$ for $k \in \{M+1,\ldots, N\}$, $\cos(\Psi-\theta_k) = 1$ for $k \in \{1,\ldots,M\}$, and $\cos(\Psi-\theta_k) = -1$ for $k \in \{M+1,\ldots, N\}$. Hence, the diagonal entries ($j = k$) of the Hessian $H(\pmb{\theta})$ are given by
\[
    h_{kk}(\pmb{\theta})=
\begin{cases}
    (1/N) + |p_\theta|, &  k\in\{1, \ldots, M\}\\
    (1/N) - |p_\theta|, & k\in\{M+1, \ldots, N\},
\end{cases}
\]
where, $1 \leq M \leq N-1$. Since $(1/N) + |p_\theta| > 0$, the Hessian matrix $H(\pmb{\theta})$ has at least one positive pivot, and hence one positive eigenvalue \cite{strang2007}. In order to show that all critical points $1 \leq M \leq N-1$ are saddle points, we verify that the Hessian matrix $H(\pmb{\theta})$ is indefinite by showing that it has at least one negative eigenvalue.

Since $\theta_k$ is as given above, $\cos(\theta_j-\theta_k) = 1$ for $j,k\in\{1, \ldots, M\}~\text{or}~j,k\in\{M+1, \ldots, N\}$, and $\cos(\theta_j-\theta_k) = -1$ for $j \in\{1, \ldots, M\}, k\in\{M+1, \ldots, N\}~\text{or}~j \in \{M+1, \ldots, N\}, k \in \{1, \ldots, M\}$. Hence, the off diagonal entries ($j \neq k$) of $H(\pmb{\theta})$ are given by
\[
    h_{jk}(\pmb{\theta})=
\begin{cases}
    (1/N), &  \left.
    \begin{array}{l}
      j,k\in\{1, \ldots, M\}~\text{or}~j,k\in\{M+1, \ldots, N\}
    \end{array}
  \right.\\
    -(1/N), & ~~\text{otherwise}.
\end{cases}
\]

Define a vector $\pmb{w} = [w_1, \ldots, w_M, -w_{M+1}, \ldots, -w_N]^T$, with $w_k = 1,~\forall k$. Then, the Hessian $H(\pmb{\theta})$ can be compactly written as
\begin{equation}
H(\pmb{\theta}) = \frac{1}{N}\pmb{w}\pmb{w}^T + |p_\theta|\text{diag}(\pmb{w}),
\end{equation}
where, $\text{diag}(\pmb{w})$ is a diagonal matrix whose diagonal entries are given by the entries of the vector $\pmb{w}$. Now, define a vector $\pmb{q} = [q_1, \ldots, q_N]^T$ with $q_k = 0, k = {1, \ldots, N-2}$, and $q_{N-1} = -1$ and $q_{N} = 1$. By construction, $\pmb{w}^T\pmb{q} = 0$ and it follows that
\begin{equation}
\pmb{q}^TH(\pmb{\theta})\pmb{q} = |p_\theta|\pmb{q}^T\text{diag}(\pmb{w})\pmb{q} = -2|p_\theta| < 0,
\end{equation}
which shows that $H(\pmb{\theta})$ is an indefinite matrix. Hence, the critical points for which the phase angles are not synchronized and $p_\theta \neq 0$ are the saddle points and unstable for $K_k < 0,~\forall k$. This completes the proof.
\end{proof}

If the agents move at an angular frequency $\omega_0$ around individual circular orbits, we have the following corollary to Theorem~\ref{Theorem1}, which ensures the stabilization of their synchronized formation.
\begin{cor}\label{cor1}
Under the control law given by
\begin{equation}
\label{control2}u_k = \omega_0 + K_k\left(\frac{\partial U}{\partial \theta_k}\right);~~K_k \neq 0,
\end{equation}
for all $k=1, \ldots, N$, the conclusions of Theorem~\ref{Theorem1} are same for the system dynamics \eqref{modelNew}.
\end{cor}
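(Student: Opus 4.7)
The plan is to re-use the same Lyapunov function $U(\pmb{\theta})$ from \eqref{potential function} and show that adding the constant rotation rate $\omega_0$ to each agent's angular dynamics does not alter the expression for $\dot{U}$, after which the proof of Theorem~\ref{Theorem1} can be invoked almost verbatim.

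First, I would compute the time derivative of $U$ along \eqref{modelNew} under the new control \eqref{control2}:
\begin{equation*}
\dot{U}(\pmb{\theta}) = \sum_{k=1}^{N}\frac{\partial U}{\partial \theta_k}\,u_k = \omega_0\sum_{k=1}^{N}\frac{\partial U}{\partial \theta_k} + \sum_{k=1}^{N} K_k T_k(\pmb{\theta}).
\end{equation*}
The key observation is that the first sum vanishes identically. Using \eqref{note_that},
\begin{equation*}
\sum_{k=1}^{N}\frac{\partial U}{\partial \theta_k} = -\left<p_\theta,\, i\sum_{k=1}^{N} e^{i\theta_k}\right> = -N\left<p_\theta,\, i p_\theta\right> = 0,
\end{equation*}
since $\langle z, iz\rangle = \mathrm{Re}(\bar z \cdot i z) = 0$ for any $z\in\mathbb{C}$. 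Therefore $\dot{U}(\pmb{\theta}) = \sum_{k}K_k T_k(\pmb{\theta})$, which is exactly the expression in \eqref{Udot}.

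With this reduction, the hypothesis $\sum_{k} K_k T_k(\pmb{\theta}) < 0$ (or the restrictive sufficient condition $K_k<0$ for all $k$) again gives $\dot{U}\le 0$, and $\dot{U}=0$ on precisely the same critical set $\Lambda$ of $U(\pmb{\theta})$ as before. I would then reapply LaSalle's invariance principle; the only difference is that on $\Lambda$ the reduced dynamics is $\dot{\theta}_k = \omega_0$ rather than $\dot{\theta}_k=0$, meaning the agents rotate as a rigid formation at the common rate $\omega_0$. Invariance of $\Lambda$ still holds because a uniform shift $\theta_k\mapsto \theta_k+\omega_0 t$ shifts $\Psi$ by the same amount, preserving $\sin(\Psi-\theta_k)=0$; this can also be verified directly by recomputing $\tfrac{d}{dt}\langle p_\theta, ie^{i\theta_k}\rangle$ with $\dot\theta_k=\omega_0$, where the $\omega_0$ contributions cancel by the same identity used above.

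Finally, the classification of critical points as the synchronized global minimum versus saddle points depends only on the Hessian of $U(\pmb{\theta})$, which is unchanged by the addition of $\omega_0$ to $u_k$. Hence the instability of all configurations with $1\le M\le N-1$ or $p_\theta=0$, and the asymptotic stability of the synchronized set, follow as in Theorem~\ref{Theorem1}. I do not anticipate a serious obstacle; the only mildly delicate point is confirming that the $\omega_0$ drift does not spoil the LaSalle argument, which is resolved by the identity $\sum_k\partial U/\partial\theta_k=0$ noted above.
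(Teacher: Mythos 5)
Your proposal is correct and follows essentially the same route as the paper: both compute $\dot{U}$ under \eqref{control2} and observe that the $\omega_0$ term drops out because $\sum_{k}\partial U/\partial\theta_k = 0$ (you prove this via the identity $\left<z, iz\right>=0$, the paper via antisymmetry of the double sum of $\sin(\theta_j-\theta_k)$ in \eqref{relation}, which are interchangeable), reducing $\dot{U}$ to \eqref{Udot} and letting Theorem~\ref{Theorem1} carry the rest. Your additional check that the invariant set survives the uniform drift $\dot{\theta}_k=\omega_0$ is a point the paper leaves implicit, and is a welcome extra.
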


\begin{proof}
Under the control \eqref{control2}, the time derivative of $U(\pmb{\theta})$ along the dynamics \eqref{modelNew} is
\begin{equation}
\label{U_dot_New}\dot{U}(\pmb{\theta}) = \omega_0\sum_{k=1}^{N}\left(\frac{\partial U}{\partial \theta_k}\right) + \sum_{k=1}^{N}K_k\left(\frac{\partial U}{\partial \theta_k}\right)^2
\end{equation}

From \eqref{note_that}, we note that
\begin{equation}
\label{relation}\sum_{k=1}^{N}\frac{\partial U}{\partial \theta_k} = -\sum_{k=1}^{N}\left<p_\theta, ie^{i\theta_k}\right> = -\frac{1}{N}
\sum_{k=1}^{N}\sum_{\substack{j=1, \\ j \neq k}}^{N} \sin(\theta_j - \theta_k) = 0.
\end{equation}

Using \eqref{relation}, \eqref{U_dot_New} can be rewritten as
\begin{equation}
\label{Udot_final}\dot{U}(\pmb{\theta}) = \sum_{k=1}^{N}K_k T_k(\pmb{\theta}),
\end{equation}
which is the same as \eqref{Udot}. Therefore, the conclusions of Theorem~\ref{Theorem1} are unchanged under control \eqref{control2}.
\end{proof}

\subsubsection{Limited Communication Topology}
At first, we introduce a few terms pertaining to limited communication topology which will be useful in the framework of this paper.

A graph is a pair $\mathcal{G} = (\mathcal{V}, \mathcal{E})$, where $\mathcal{V} = \{v_1, \ldots, v_N\}$ is a set of $N$ nodes or vertices and $\mathcal{E} \subseteq \mathcal{V}\times \mathcal{V}$ is a set of edges or links. Elements of $\mathcal{E}$ are denoted as $(v_j, v_k)$ which is termed an edge or a link from $v_j$ to $v_k$. A graph $\mathcal{G}$ is called an undirected graph if it consists of only undirected links. The node $v_j$ is called a neighbor of node $v_k$ if the link $(v_j, v_k)$ exists in the graph $\mathcal{G}$. In this paper, the set of neighbors of node $v_j$ is represented by $\mathcal{N}_j$. A complete graph is an undirected graph in which every pair of nodes is connected, that is, $(v_j, v_k) \in \mathcal{E}$, $\forall j, k \in N$. The Laplacian of a graph $\mathcal{G}$, denoted by $L = [l_{jk}] \in \mathbb{R}^{N\times N}$, is defined as
\[
    l_{jk}=
\begin{dcases}
    |\mathcal{N}_j|, & \text{if } j = k\\
    -1,              & \text{if } k \in \mathcal{N}_j\\
    0                & \text{otherwise}
\end{dcases}\label{laplacian}
\]
where, $|\mathcal{N}_j|$ is the cardinality of the set $\mathcal{N}_j$. Some of the important properties of the Laplacian which are relevant to this paper can be found in \cite{Bai2011}, and are as follows: The Laplacian $L$ of an undirected graph $\mathcal{G}$ is $(\text{P}1)$ symmetric and positive semi-definite, and $(\text{P}2)$ has an eigenvalue of zero associated with the eigenvector $\pmb{1}$, that is, $L\pmb{x} = 0$ iff $\pmb{x} = \pmb{1} x_0$.

In order to account for limited communication among agents, we modify the potential function \eqref{potential function} in the following manner \cite{Sepulchre2008}:

Let $P = I_N - (1/N)\pmb{1}\pmb{1}^T$, where, $I_N$ is an $N\times N$-identity matrix, be a projection matrix which satisfies $P^2 = P$. Let the vector $e^{i\pmb{\theta}}$ be represented by $e^{i\pmb{\theta}} = (e^{i\theta_1}, \ldots, e^{i\theta_N})^T \in \mathbb{C}^N$. Then, $Pe^{i\pmb{\theta}} = e^{i\pmb{\theta}} - p_\theta\pmb{1}$. One can obtain the equality
\begin{equation}
\label{identity}||P e^{i\pmb{\theta}}||^2 = \left<e^{i\pmb{\theta}}, P e^{i\pmb{\theta}}\right> = N(1 - |p_\theta|^2),
\end{equation}
which is minimized when $|p_\theta| = 1$ (synchronized formation). Since, $P$ is ($1/N$) times the Laplacian of the complete graph, the identity \eqref{identity} suggests that the optimization of $U(\pmb{\theta})$ in \eqref{potential function} may be replaced by the optimization of
\begin{equation}
\label{laplacian_potential}W_L(\pmb{\theta}) = Q_L(e^{i\pmb{\theta}}) = ({1}/{2})\left<e^{i\pmb{\theta}}, Le^{i\pmb{\theta}}\right>,
\end{equation}
which is a Laplacian quadratic form associated with $L$. Note that, for a connected graph, the quadratic form \eqref{laplacian_potential} is positive semi-definite, and vanishes only when $e^{i\pmb{\theta}} = e^{i\theta_c}\pmb{1}$, where $\theta_c \in S^1$ is a constant (see property $\text{P}2$), that is, the potential $W_L(\pmb{\theta})$ is minimized in the synchronized formation.

\begin{thm}\label{Theorem2}
Let $L$ be the Laplacian of an undirected and connected graph $\mathcal{G} = (\mathcal{V}, \mathcal{E})$ with $N$ vertices. Consider the system dynamics \eqref{modelNew} with the control law
\begin{equation}
\label{control4}u_k = K_k\left(\frac{\partial W_L}{\partial \theta_k}\right);~~K_k \neq 0,
\end{equation}
and define a term
\begin{equation}
\label{term_new}\overline{T}_k(\pmb{\theta}) = \left(\frac{\partial W_L}{\partial \theta_k}\right)^2
\end{equation}
for all $k = 1, \ldots, N$. If $\sum_{k=1}^{N} K_k \overline{T}_k(\pmb{\theta}) < 0$, all the agents asymptotically stabilize to a synchronized formation. Moreover, $K_k < 0$ for all $k$, is a restrictive sufficient condition in stabilizing synchronized formation.
\end{thm}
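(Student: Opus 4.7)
The plan is to mirror the proof of Theorem~\ref{Theorem1}, replacing the potential $U(\pmb{\theta})$ by the Laplacian potential $W_L(\pmb{\theta})$ defined in \eqref{laplacian_potential}. First, I would verify that $W_L$ is a valid Lyapunov function candidate: by property $\text{P}1$ the Laplacian $L$ is positive semi-definite, so $W_L(\pmb{\theta})\geq 0$; and by property $\text{P}2$ together with the connectivity of $\mathcal{G}$, equality holds iff $e^{i\pmb{\theta}}=e^{i\theta_c}\pmb{1}$ for some $\theta_c\in S^1$, that is, exactly on the synchronized set.

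Next I would compute $\dot{W}_L$ along the dynamics \eqref{modelNew}. The chain rule gives $\dot{W}_L=\sum_{k=1}^{N}(\partial W_L/\partial\theta_k)\dot\theta_k$, and substituting the control law \eqref{control4} together with \eqref{term_new} yields $\dot{W}_L=\sum_{k=1}^{N}K_k(\partial W_L/\partial\theta_k)^2=\sum_{k=1}^{N}K_k\overline{T}_k(\pmb{\theta})$, in direct analogy with \eqref{Udot}. The first claim then follows: if $\sum_{k}K_k\overline{T}_k(\pmb{\theta})<0$ away from equilibrium, then $\dot{W}_L<0$ and Lyapunov's theorem drives trajectories to the minimum of $W_L$, which is the synchronized formation.

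For the restrictive sufficient condition $K_k<0$ for all $k$, observe that $\overline{T}_k\geq 0$ automatically gives $\dot{W}_L\leq 0$. Since $\pmb{\theta}$ evolves on the compact torus $\mathbb{T}^N$, LaSalle's invariance principle applies: every trajectory approaches the largest invariant set contained in $\{\dot{W}_L=0\}=\{\pmb{\theta}:\partial W_L/\partial\theta_k=0,\ \forall k\}$, the critical set of $W_L$. Invariance of this set is immediate, because on it \eqref{modelNew2} reduces to $\dot\theta_k=0$ for all $k$, freezing the phase vector. The synchronized configuration is the global minimum of $W_L$ by the connectivity argument above, hence a strict local minimum and therefore asymptotically stable.

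The main obstacle will be closing the analog of the Hessian/saddle-point analysis used in Theorem~\ref{Theorem1} to rule out attraction to non-synchronized critical points. Unlike the complete-graph potential $U$, whose critical set admits the tidy characterization $\theta_k\in\{\Psi,\Psi+\pi\}$, the critical set of $W_L$ for a general connected graph is topology-dependent and may include phase-locked or balanced configurations that are not enumerable in closed form. A pragmatic route is to mimic the indefinite-Hessian construction used in Theorem~\ref{Theorem1}: at any non-synchronized critical point, exhibit a perturbation direction $\pmb{q}$ orthogonal to the synchronizing mode $\pmb{1}$ along which $\pmb{q}^T H_{W_L}(\pmb{\theta})\pmb{q}<0$, so that such a point is unstable under $K_k<0$. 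Even without a fully graph-independent version of this step, the Lyapunov/LaSalle chain already delivers asymptotic stability of the synchronized set, which is what the theorem statement asserts.
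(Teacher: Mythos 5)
Your Lyapunov/LaSalle skeleton is exactly the paper's: $W_L \geq 0$ with equality only on the synchronized set (by connectivity and property $\text{P}2$), $\dot{W}_L = \sum_{k} K_k \overline{T}_k$, and LaSalle on the compact torus forcing convergence to the critical set of $W_L$. The genuine gap is in the last step, and your fallback claim that the Lyapunov/LaSalle chain ``already delivers asymptotic stability of the synchronized set'' is not correct: LaSalle only yields convergence to the full critical set $\{\pmb{\theta} : \partial W_L/\partial\theta_k = 0,\ \forall k\}$, which for a general connected graph contains non-synchronized configurations, so the theorem's conclusion does not follow until those are dealt with. You flag this difficulty but leave it open.

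The paper closes this step not with a Hessian computation but with a spectral argument. From \eqref{note_that_new}, $\partial W_L/\partial\theta_k = \left<ie^{i\theta_k}, L_k e^{i\pmb{\theta}}\right>$, so whenever $e^{i\bar{\pmb{\theta}}}$ is an eigenvector of $L$ with eigenvalue $\lambda$ one gets $\partial W_L/\partial\theta_k\big|_{\bar{\pmb{\theta}}} = \lambda\left<ie^{i\bar{\theta}_k}, e^{i\bar{\theta}_k}\right> = 0$, i.e.\ $\bar{\pmb{\theta}}$ is a critical point. Connectivity makes the kernel of $L$ equal to the span of $\pmb{1}$, so the $\lambda = 0$ critical points are exactly the synchronized states, while the eigenvectors for $\lambda \neq 0$ satisfy $\pmb{1}^T e^{i\bar{\pmb{\theta}}} = 0$ and are asserted to be the unstable equilibria. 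Be aware also that your proposed repair --- exhibiting, at every non-synchronized critical point, a direction $\pmb{q}$ orthogonal to $\pmb{1}$ with $\pmb{q}^T H_{W_L}(\pmb{\theta})\pmb{q} < 0$ --- cannot work graph-independently: for a cycle graph with $N \geq 5$ the potential $W_L$ admits non-synchronized local minima (twisted phase-locked states) whose Hessian is positive semi-definite, so no such $\pmb{q}$ exists there. (The paper's own instability claim for the $\lambda \neq 0$ critical points is itself stated without proof and is subject to the same caveat, but its eigenvector characterization of the critical set is the concrete step your write-up is missing.)
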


\begin{proof}
The time derivative of $W_L(\pmb{\theta})$, along the dynamics \eqref{modelNew}, is
\begin{equation}
\label{W_Ldot}\dot{W}_L(\pmb{\theta}) = \sum_{k=1}^{N} \left(\frac{\partial W_L}{\partial \theta_k}\right) \dot{\theta}_k = \sum_{k=1}^{N} \left(\frac{\partial W_L}{\partial \theta_k}\right)u_k.
\end{equation}
Using \eqref{control4} and \eqref{term_new}
\begin{equation}
\label{WLdot}\dot{W}_L(\pmb{\theta}) = \sum_{k=1}^{N} K_k\left(\frac{\partial W_L}{\partial \theta_k}\right)^2 = \sum_{k=1}^{N} K_k\overline{T}_k.
\end{equation}
The rest of the proof proceeds in a similar way as the proof of Theorem~\ref{Theorem1}. We just need to analyze the critical set of the potential $W_L(\pmb{\theta})$, which is as follows:

{\it Analysis of the critical set}:
The critical set of $W_L(\pmb{\theta})$ is the set of all $\pmb{\theta} \in \mathbb{T}^N$ for which $(\partial W_L/\partial \theta_k) = 0,~\forall k$. Note that
\begin{equation}
\label{note_that_new}\frac{\partial W_L}{\partial \theta_k} = \frac{1}{2}\sum_{j=1}^{N}\frac{\partial}{\partial \theta_k}\left<e^{i\theta_j}, L_je^{i\pmb{\theta}}\right> = \left<ie^{i\theta_k}, L_ke^{i\pmb{\theta}}\right>,
\end{equation}
where, $L_k$ is the $k^\text{th}$ row of the Laplacian $L$. Thus, the critical points of $W_L(\pmb{\theta})$ are given by the $N$ algebraic equations
\begin{equation}
\frac{\partial W_L}{\partial \theta_k} = \left<ie^{i\theta_k}, L_ke^{i\pmb{\theta}}\right> = 0,~~~1 \leq k \leq N.
\end{equation}

Let $e^{i\bar{\pmb{\theta}}}$ be an eigenvector of $L$ with eigenvalue $\lambda \in \mathbb{R}$. Then, $Le^{i\bar{\pmb{\theta}}} = \lambda e^{i\bar{\pmb{\theta}}}$, and
\begin{equation}
\label{critical_set_1}\frac{\partial W_L}{\partial \theta_k}\Big{|}_{\pmb{\theta} = \bar{\pmb{\theta}}} = \left<ie^{i\bar{\theta}_k}, L_ke^{i\bar{\pmb{\theta}}}\right> = \lambda\left<ie^{i\bar{\theta}_k}, e^{i\bar{\theta}_k}\right> = 0,
\end{equation}
which implies that $\bar{\pmb{\theta}}$ is a critical point of $W_L(\pmb{\theta})$. Since graph $\mathcal{G}$ is undirected, the Laplacian $L$ is symmetric, and hence its eigenvectors associated with distinct eigenvalues are mutually orthogonal \cite{strang2007}. Since $\mathcal{G}$ is also connected, $\pmb{1}$ spans the kernel of $L$. Therefore, the eigenvector associated with $\lambda = 0$ is $e^{i\bar{\pmb{\theta}}} = e^{i\theta_c}\pmb{1}$ for any $\theta_c \in S^1$, which implies $\bar{\pmb{\theta}}$ is synchronized. All the remaining eigenvectors satisfy $\pmb{1}^Te^{i\bar{\pmb{\theta}}} = 0$ and characterize the unstable equilibria. This completes the proof.
\end{proof}

Similar to Corollary~\ref{cor1}, we have the following corollary to Theorem~\ref{Theorem2}.
\begin{cor}\label{cor2}
Let $L$ be the Laplacian of an undirected and connected graph $\mathcal{G} = (\mathcal{V}, \mathcal{E})$ with $N$ vertices. Under the control law given by
\begin{equation}
\label{control6}u_k = \omega_0 + K_k\left(\frac{\partial W_L}{\partial \theta_k}\right);~~K_k \neq 0,
\end{equation}
for all $k=1, \ldots, N$, the conclusions of Theorem~\ref{Theorem2} are same for the system dynamics \eqref{modelNew}.
\end{cor}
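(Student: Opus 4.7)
The plan is to mirror the argument used for Corollary~\ref{cor1}. I would first differentiate $W_L(\pmb{\theta})$ along the trajectories of \eqref{modelNew} driven by \eqref{control6}, which yields
\begin{equation*}
\dot{W}_L(\pmb{\theta}) = \omega_0 \sum_{k=1}^{N}\frac{\partial W_L}{\partial \theta_k} + \sum_{k=1}^{N} K_k\left(\frac{\partial W_L}{\partial \theta_k}\right)^2.
\end{equation*}
The second sum already matches \eqref{WLdot}, so everything hinges on showing that the $\omega_0$ drift term vanishes. Once that identity is in hand, the Lyapunov inequality, the LaSalle invariance argument, and the characterization of the critical set through the eigenvectors of $L$ carry over verbatim from the proof of Theorem~\ref{Theorem2}, and no new spectral analysis is needed.

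The key step is therefore to establish $\sum_{k=1}^{N}\partial W_L/\partial \theta_k = 0$. Starting from \eqref{note_that_new} and expanding $L_k e^{i\pmb{\theta}} = \sum_{j=1}^{N} l_{kj}e^{i\theta_j}$, a direct computation with the inner product $\langle z_1,z_2\rangle = \textrm{Re}(\bar{z}_1 z_2)$ gives
\begin{equation*}
\frac{\partial W_L}{\partial \theta_k} = \sum_{j=1}^{N} l_{kj}\sin(\theta_j - \theta_k).
\end{equation*}
Summing over $k$ produces a double sum $\sum_{k,j} l_{kj}\sin(\theta_j-\theta_k)$. Because the graph is undirected, $L$ is symmetric ($l_{kj}=l_{jk}$), and since $\sin$ is odd, the pairs $(j,k)$ and $(k,j)$ cancel, yielding zero. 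Equivalently, the diagonal terms drop out ($\sin 0=0$) and each edge $\{j,k\}\in\mathcal{E}$ contributes $-\sin(\theta_j-\theta_k)-\sin(\theta_k-\theta_j)=0$, so the identity holds for any undirected graph Laplacian, not just the complete one used in Corollary~\ref{cor1}.

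Substituting this identity into the expression above reduces $\dot{W}_L(\pmb{\theta})$ to $\sum_{k=1}^{N}K_k \overline{T}_k(\pmb{\theta})$, exactly as in \eqref{WLdot}, so the hypothesis $\sum_k K_k \overline{T}_k(\pmb{\theta}) < 0$ again forces strict Lyapunov descent away from the critical set of $W_L$. I would then invoke the critical-set analysis already given in Theorem~\ref{Theorem2}---the set of synchronized configurations corresponds to the kernel eigenvector $e^{i\theta_c}\pmb{1}$ of $L$ and is asymptotically stable for $K_k<0$, while the remaining eigenvector directions are unstable---to conclude the result. There is no serious obstacle; the only delicate point is the cancellation identity, which has a clean geometric interpretation: the Laplacian-coupling potential $W_L$ depends solely on relative phase differences, so the common rotation imposed by $\omega_0$ should not influence its time derivative, and the undirectedness of $\mathcal{G}$ is precisely what encodes this in the sum.
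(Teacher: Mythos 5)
Your proposal is correct and follows essentially the same route as the paper: differentiate $W_L$ under \eqref{control6}, show the drift term $\omega_0\sum_k \partial W_L/\partial\theta_k$ vanishes (the paper's \eqref{relation_limited}, which reduces to the same double sum $\sum_{k}\sum_{j\in\mathcal{N}_k}\sin(\theta_j-\theta_k)=0$ that you cancel via symmetry of $L$ and oddness of $\sin$), and then invoke the proof of Theorem~\ref{Theorem2} unchanged. Your write-up merely makes the cancellation argument slightly more explicit than the paper does.
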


\begin{proof}
Under the control \eqref{control6}, the time derivative of $W_L(\pmb{\theta})$ along the dynamics \eqref{modelNew} is
\begin{equation}
\label{WL_dot_New}\dot{W_L}(\pmb{\theta}) = \omega_0\sum_{k=1}^{N}\left(\frac{\partial W_L}{\partial \theta_k}\right) + \sum_{k=1}^{N}K_k\left(\frac{\partial W_L}{\partial \theta_k}\right)^2
\end{equation}

From \eqref{note_that_new}, we note that
\begin{equation}
\label{relation_limited}\sum_{k=1}^{N}\frac{\partial W_L}{\partial \theta_k} = \sum_{k=1}^{N}\left<ie^{i\theta_k}, L_ke^{i\pmb{\theta}}\right> = -\sum_{k=1}^{N}\sum_{j \in \mathcal{N}_k} \sin(\theta_j - \theta_k) = 0.
\end{equation}

Using \eqref{relation_limited}, \eqref{WL_dot_New} can be rewritten as
\begin{equation}
\label{WLdot_final}\dot{W_L}(\pmb{\theta}) = \sum_{k=1}^{N}K_k \overline{T}_k(\pmb{\theta}),
\end{equation}
which is the same as \eqref{WLdot}. Therefore, the conclusions of Theorem~\ref{Theorem2} are unchanged under control \eqref{control6}.
\end{proof}

\subsection{Problem Description}
Now, we formally state the main objective of this paper. Using \eqref{relation} and \eqref{relation_limited}, the control laws, given by \eqref{control1}, and \eqref{control4} can be written as
\begin{equation}
\label{control3}\dot{\theta}_k = - \frac{K_k}{N} \sum_{\substack{j=1, \\ j \neq k}}^{N} \sin(\theta_j - \theta_k),
\end{equation}
\begin{equation}
\label{control5}\dot{\theta}_k = - K_k \sum_{j \in \mathcal{N}_k} \sin(\theta_j - \theta_k),
\end{equation}
for the all-to-all and limited communication scenarios, respectively. The term $K_k$ in the control laws \eqref{control3} and \eqref{control5} is the controller gain for the $k^\text{th}$ agent. Prior work in \cite{Sepulchre2007} uses the same controller gain $K$ for all $k$, whereas we extend the analysis by using different gains $K_k$ for different agents. This is the heterogeneous controller gains case of interest in this paper. In subsequent sections, we will explore the effect of heterogeneous controller gains on the final velocity direction of the agents in their synchronized formation.

\begin{remark}
Note that, in the Theorem~\ref{Theorem1} (Theorem~\ref{Theorem2}), the conditions $\sum_{k=1}^{N}K_k T_k(\pmb{\theta}) < 0 \left(\sum_{k=1}^{N}K_k \overline{T}_k(\pmb{\theta})< 0 \right) $ may be satisfied for both positive and negative values of gains $K_k$ because of the involvement of the term $T_k(\pmb{\theta}) (\overline{T}_k(\pmb{\theta}))$. However, in this paper, the idea of introducing heterogeneous gains is illustrated mainly for the restrictive sufficient condition on $K_k$, that is, $K_k < 0, \forall k$, since the analysis for the set of gains $K_k$ satisfying $\sum_{k=1}^{N}K_k T_k(\pmb{\theta}) < 0 \left(\sum_{k=1}^{N}K_k \overline{T}_k(\pmb{\theta})< 0 \right)$ is quite involved for $N > 2$. Moreover, it will be shown for the simple case of $N=2$ that the reachable set of the final velocity direction further expands for the controller gains $K_k$ satisfying the condition $\sum_{k=1}^{N}K_k T_k(\pmb{\theta}) < 0 \left(\sum_{k=1}^{N}K_k \overline{T}_k(\pmb{\theta})< 0 \right)$.
\end{remark}

\begin{figure*}
\centering
\subfigure[$e^{i\theta_{k0}} \in S$]{\includegraphics[scale=0.98]{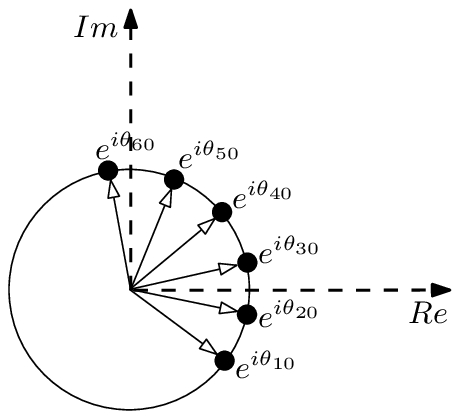}}\hspace{0.4cm}
\subfigure[$Co(S)$]{\includegraphics[scale=0.98]{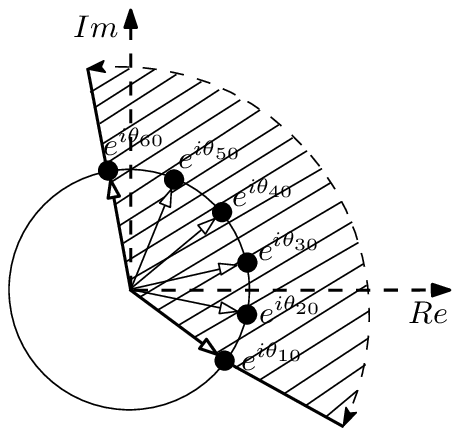}}\hspace{0.4cm}
\subfigure[$Co(S) \bigcap S_z$]{\includegraphics[scale=0.98]{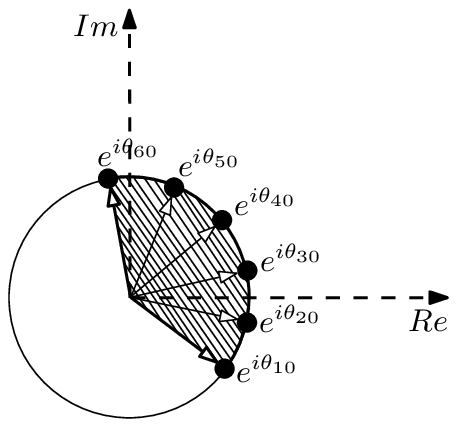}}
\caption{Arrangement of all the initial vectors around the unit circle for $N=6$. $(a)$ All the unit vectors $e^{i\theta_{k0}},~k = 1, \ldots, 6$, belong to the set $S$. $(b)$ Conic hull of the arrangement of these unit vectors $e^{i\theta_{k0}},~k = 1, \ldots, 6$. $(c)$ Region $Co(S) \bigcap S_z$.}
\label{Arrangement of all the points}
\end{figure*}

\section{Synchronized Formation and Reachable Velocity Directions}
The agents are said to be in synchronized formation when, at all times, the direction of their movement approaches a common velocity direction $\theta_c \in S^1$, that is,
\begin{equation}
\label{angle_relation_synchronization}\theta_1(t) = \theta_2(t) = \theta_3(t) =,\ldots, \theta_N(t) = \theta_c~(\text{mod}~2\pi).
\end{equation}

At first, we derive an analytical expression of $\theta_c$ for $\omega_0 = 0$. Then, we extend these results to $\omega_0 \neq 0$ by performing the analysis in a rotating frame of reference.

\subsection{Case~1: $\omega_0 = 0$}
For $\omega_0 = 0$, synchronization corresponds to parallel motion of all the agents in a fixed direction $\theta_c$, with arbitrary but constant relative spacing.

Before proceeding further, we state the following definitions from \cite{Barvinok2002}$-$\cite{Nef1988}, based on which further analysis is carried out.

\begin{defn}\label{convex_cone}
(cone, convex cone and conic hull) Let $V$ be a vector space. A set $\Gamma\subset V$ is called a cone if $\pmb{0} \in \Gamma$ and $\lambda \pmb{x} \in \Gamma$ for every $\lambda \geq 0$ and every $\pmb{x} \in \Gamma$. Moreover, the set $\Gamma \subset V$ is called a convex cone if $\pmb{0} \in \Gamma$ and if for any two points $\pmb{x}, \pmb{y} \in \Gamma$ and any two numbers $a, b \geq 0$, the point $\pmb{z} = a\pmb{x} + b\pmb{y}$ is also in $\Gamma$. Given points $\pmb{x_1}, \ldots, \pmb{x_m} \in \Gamma$ and non-negative numbers $\tau_1, \ldots, \tau_m$, the point
\begin{equation}
\pmb{x} = \sum_{j=1}^{m} \tau_j \pmb{x}_j
\end{equation}
is called a conic combination of the points $\pmb{x_1}, \ldots, \pmb{x_m}$. The set $Co(S)$ of all conic combinations from a set $S \subset \Gamma$ is called the conic hull of the set $S$.
\end{defn}

\begin{defn}
(ray and extreme ray) Let $V$ be a vector space and the set $\Gamma \subset V$ be a cone. The set of points $\lambda \pmb{x}$, $\lambda \geq 0$ of a non-zero point $\pmb{x} \in \Gamma$ is called a ray spanned by $\pmb{x}$. Let $\Gamma_1 \subset \Gamma$ be a ray. We say that $\Gamma_1$ is an extreme ray of $\Gamma$ if for any $\pmb{v} \in \Gamma_1$ and any $\pmb{x}, \pmb{y} \in \Gamma$, whenever $\pmb{v} = (\pmb{x} + \pmb{y})/2$, we must have  $\pmb{x}, \pmb{y} \in \Gamma_1$.
\end{defn}

\begin{defn}
(acute convex cone) A convex cone $\Gamma$ is said to be an acute convex cone if $\Gamma \bigcap (-\Gamma) = \{0\}$, that is, if $\pmb{x} \in \Gamma$ and $-\pmb{x} \in \Gamma$ implies $\pmb{x} = \pmb{0}$.
\end{defn}

Based on these definitions, we further define the following terms useful in the framework of this paper.

Let the agents, with dynamics given by \eqref{modelNew}, start from initial heading angles $\pmb{\theta}(0) = (\theta_{10}, \ldots, \theta_{N0})^T \in \mathbb{D}^N$, where $\mathbb{D} = (-\pi, \pi)$. Let us define $S = \left\{e^{i\theta_{k0}},~k = 1, \ldots, N\right\}$ as the set of points around the unit circle in the complex plane and let $Co(S)$ be the conic hull of $S$. For $N=6$, Fig.~$1(a)$ shows one of the arrangements of all the unit vectors belonging to the set $S$, and for this arrangement, $Co(S)$ is shown by the shaded region in Fig.~$1(b)$. In Fig.~$1(b)$, $e^{i\theta_{10}}$ and $e^{i\theta_{60}}$ are the unit vectors along the extreme rays of $Co(S)$.

Let $S_z = \{z \in \mathbb{C}~{\big |}~|z| \leq 1\}$ be the set of all the points residing in the interior and on the boundary of a unit circle in the complex plane. Then, $Co(S) \bigcap S_z$ is a circular sector as shown by the shaded region in Fig.~$1(c)$.

Based on these notations, we now state the following lemma which depicts the behavior of the order parameter $p_\theta$ with time against heterogeneous controller gains $K_k < 0, \forall k$.

\begin{lem}\label{lem1}
Consider $N$ agents, with dynamics given by \eqref{modelNew}, under the control law \eqref{control3} with $K_k < 0, \forall k$. Let the initial heading angle of the agents be given by $\pmb{\theta}(0)$ such that $Co(S)$ is an acute convex cone. Then,
\begin{equation}
p_\theta \in Co(S) \bigcap S_z,~\forall t\geq0,
\end{equation}
where, $p_\theta$ is the order parameter, and is defined by \eqref{phase_order_parameter}.
\end{lem}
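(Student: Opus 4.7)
The containment $p_\theta \in S_z$ is immediate from the triangle inequality, since $|p_\theta| = \bigl|\frac{1}{N}\sum_k e^{i\theta_k}\bigr| \leq 1$ at all times. So the real content of the lemma is to show $p_\theta(t) \in Co(S)$ for every $t \geq 0$. The plan is to reduce this to a forward-invariance statement for the individual heading angles: I will show that every unit vector $e^{i\theta_k(t)}$ stays inside $Co(S)$, and then conclude that $p_\theta(t)$, which is the conic (indeed convex) combination $\frac{1}{N}\sum_k e^{i\theta_k(t)}$, also lies in $Co(S)$.

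First I would exploit the acuteness of $Co(S)$. Because $Co(S) \cap (-Co(S)) = \{0\}$ and the cone is planar, its angular span is strictly less than $\pi$. Accordingly, I would choose real representatives of the initial heading angles (shifting by integer multiples of $2\pi$ where necessary) lying in a common closed interval $[\theta_{\min},\theta_{\max}]$ with $\theta_{\max} - \theta_{\min} < \pi$, where $\theta_{\min}$ and $\theta_{\max}$ correspond to the two extreme rays of $Co(S)$. Under this identification one has $Co(S) = \{\, r\, e^{i\alpha} : r \geq 0,\; \alpha \in [\theta_{\min},\theta_{\max}]\,\}$, and the inclusion $e^{i\theta_k(t)} \in Co(S)$ is equivalent to $\theta_k(t) \in [\theta_{\min},\theta_{\max}]$.

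The central step is a Nagumo-type invariance argument applied to the hypercube $[\theta_{\min},\theta_{\max}]^N$ under the dynamics \eqref{control3}. Suppose at some instant $\theta_m = \theta_{\max}$ while $\theta_j \in [\theta_{\min},\theta_{\max}]$ for every $j$. Then $\theta_j - \theta_m \in (-\pi, 0]$ for all $j \neq m$, so $\sin(\theta_j - \theta_m) \leq 0$; combined with $K_m < 0$ this gives $\dot{\theta}_m = -(K_m/N)\sum_{j\neq m}\sin(\theta_j - \theta_m) \leq 0$, so $\theta_m$ cannot increase past $\theta_{\max}$. The symmetric computation at $\theta_m = \theta_{\min}$ yields $\dot{\theta}_m \geq 0$. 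Thus on every boundary face of the product interval the vector field points inward or is tangent, which by a standard first-exit contradiction argument makes $[\theta_{\min},\theta_{\max}]^N$ positively invariant.

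Assembling the pieces, for every $t \geq 0$ each $\theta_k(t) \in [\theta_{\min},\theta_{\max}]$, so $e^{i\theta_k(t)} \in Co(S)$, and therefore the average $p_\theta(t)$ lies in $Co(S)$; combined with $|p_\theta(t)| \leq 1$ this yields $p_\theta(t) \in Co(S) \cap S_z$. I expect the only real subtlety to be handling the $2\pi$-ambiguity cleanly, i.e., verifying that the acuteness of $Co(S)$ really does allow a single branch of angular representatives with total range strictly less than $\pi$, and that the dynamics \eqref{control3}, which depend only on $\sin(\theta_j - \theta_k)$, are unaffected by such a choice. After that, the remainder is direct sign analysis driven by the hypothesis $K_k < 0$.
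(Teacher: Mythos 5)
Your proof is correct, and it takes a genuinely different route from the paper's. The paper rewrites the dynamics in mean-field form, $\dot{\theta}_k = -K_k\left|p_\theta\right|\sin(\Psi - \theta_k)$, argues from the sign of $\dot{\theta}_k$ that each heading is continually pulled toward the (time-varying) average phase $\Psi$, observes that $p_\theta(0)$ is a conic combination of the initial unit vectors and hence lies in $Co(S)\bigcap S_z$, and then concludes that $p_\theta$ remains there because all the unit vectors keep approaching it. That last step is asserted rather than derived: attraction toward a moving $\Psi$ does not by itself pin $p_\theta$ inside the cone unless one already knows the individual headings stay in the sector. Your argument supplies exactly that missing invariance: you work directly with the pairwise-sine form of \eqref{control3}, choose angular representatives in an interval $[\theta_{\min},\theta_{\max}]$ of width less than $\pi$ (which acuteness of $Co(S)$ licenses), and check that on each face of the box $[\theta_{\min},\theta_{\max}]^N$ the sign of $\sum_{j\neq m}\sin(\theta_j-\theta_m)$ combined with $K_m<0$ makes the vector field subtangential, so the box is positively invariant by a Nagumo-type argument. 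Each $e^{i\theta_k(t)}$ then stays in the sector, and $p_\theta$, being a convex combination of points of the convex cone $Co(S)$, stays in $Co(S)$; the $S_z$ part is the triangle inequality, as you note. The paper's route buys physical intuition about attraction to the mean field (and sets up machinery reused later), while yours is the more self-contained and rigorous invariance proof; your flagged subtlety about the $2\pi$-branch is real and you resolve it correctly, since the dynamics depend only on $\sin(\theta_j-\theta_k)$.
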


\begin{figure}[!t]
\centering
\includegraphics[scale=0.67]{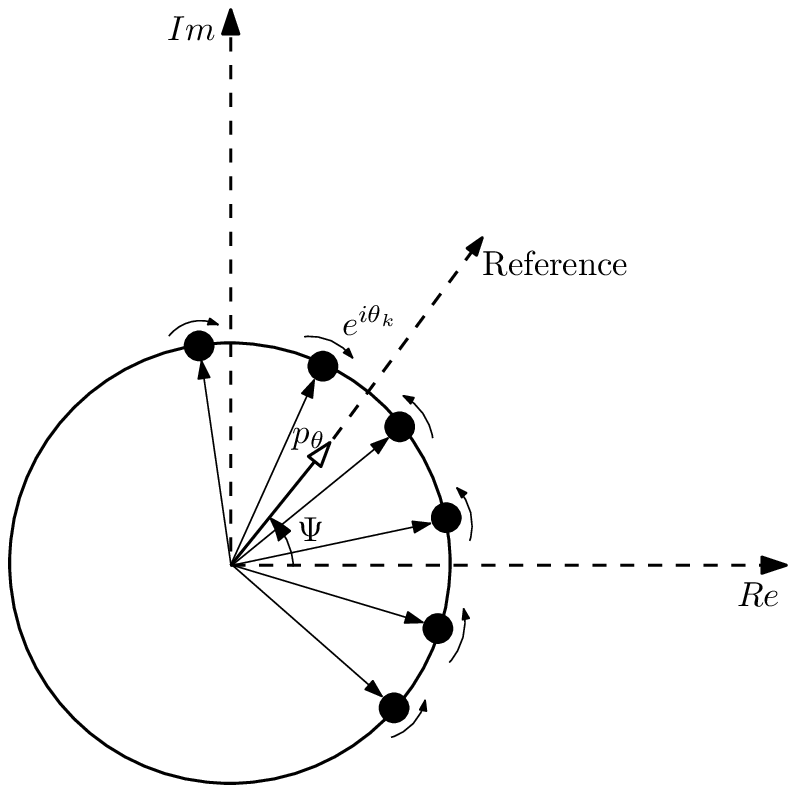}
\caption{The unit vectors $e^{i\theta_{k}}$ at a particular instant of time $t = t_1$. All the vectors are pulled toward the average phase $\Psi$ of the order parameter $p_\theta$.}
\label{Dynamics of all the unit vectors}
\end{figure}

\begin{proof}
From \eqref{phase_order_parameter}, we can write
\begin{equation}
\left|p_\theta\right|e^{i(\Psi - \theta_k)} =  \frac{1}{N}\sum_{\substack{j=1, \\ j \neq k}}^{N}e^{i(\theta_j - \theta_k)},
\end{equation}
the imaginary part of which is given by
\begin{equation}
\label{phase_order_parameter_New}\left|p_\theta\right|\sin(\Psi - \theta_k) = \frac{1}{N}\sum_{\substack{j=1, \\ j \neq k}}^{N} \sin(\theta_j - \theta_k)
\end{equation}
Using \eqref{phase_order_parameter_New}, \eqref{control3} can be written as
\begin{equation}
\label{theta_dot_new}\dot{\theta}_k = -K_k\left|p_\theta\right|\sin(\Psi - \theta_k),
\end{equation}
which implies that the heading angle $\theta_k$ of the $k^\text{th}$ agent is pulled toward the average phase $\Psi$ of the whole ensemble. The interpretation of the dynamics \eqref{theta_dot_new}, at a particular instant of time $t = t_1$, is shown in Fig.~\ref{Dynamics of all the unit vectors}, where the heading rate vectors, $e^{i\theta_k}$ of all the agents are represented as the swarm of points moving around the unit circle in the complex plane. By scaling each vector $e^{i\theta_k}$ by a factor of $1/N$, and then taking their resultant over all $k = 1, \ldots, N$, we get the vector $p_\theta$.

For better understanding of the dynamics \eqref{theta_dot_new}, $\forall k$, and $\forall t$, it is convenient to choose the reference axis along the order parameter $p_\theta$, as shown in Fig.~\ref{Dynamics of all the unit vectors}, and measure the angle of each unit vector with respect to it. By doing so, it is easy to see that $\left|\Psi - \theta_k\right| < \pi$ for $e^{i\theta_{k0}} \in S, \forall k$. Therefore, for $K_k < 0, \forall k$, one can observe from \eqref{theta_dot_new} that, if $0 < \Psi - \theta_k < \pi$ (that is, for unit vectors lying in the clockwise direction of $p_\theta$), $\dot{\theta}_k > 0$, and if $-\pi < \Psi - \theta_k < 0$ (that is, for unit vectors lying in the anticlockwise direction of $p_\theta$), $\dot{\theta}_k < 0$. It means that the heading angle of the $k^\text{th}$ agent always pulls toward the average phase $\Psi$ of the group.

Also, at time instant $t=0$, the linear momentum vector $p_\theta$ from \eqref{phase_order_parameter} is given by
\begin{equation}
p_{\theta}(0) = \sum_{k=1}^{N} \mu_k e^{i\theta_{k0}},
\end{equation}
where, $\mu_k = {1}/{N}, \forall k$, is a non-zero constant. Since $|p_\theta(0)| \leq 1$, the vector $p_{\theta}(0)$, according to the above definitions, lies in $Co(S) \bigcap S_z$ for $e^{i\theta_{k0}} \in S, \forall k$. Moreover, since all the unit vectors $e^{i\theta_k}$, at all times, approach $p_\theta$, the order parameter $p_\theta$ always remains in $Co(S) \bigcap S_z$, that is, $p_\theta \in Co(S) \bigcap S_z, \forall t\geq0$. This completes the proof.
\end{proof}

The previous result is obtained for the all-to-all communication scenario. Similarly, in the limited communication scenario, by using the phase order parameter
\begin{equation}
p^k_\theta = \frac{1}{N}\sum_{j \in \mathcal{N}_k}e^{i\theta_j} = |p^k_\theta|e^{i\Psi^k},
\end{equation}
it can be proved that the $k^\text{th}$ agent always approaches to vector $p^k_\theta$. Since, $\forall k$, $p^k_\theta \in Co(S) \bigcap S_z,~\forall t\geq0$, all the agents synchronize at an angle within the acute convex cone only.

\begin{figure}[!t]
\centering
\includegraphics[scale= 1.2]{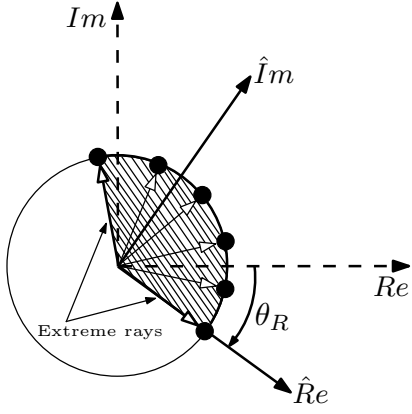}
\caption{The new coordinate system, obtained by rotating the standard coordinate system by an angle $\theta_R \in [-\pi, \pi)$. The angle $\theta_R$ is chosen such that the real axis of this new coordinate system lies along that extreme ray of $Co(S)$ so that all the initial heading angles $\pmb{\theta}(0)$, in this new coordinate system, are non-negative.}
\label{New coordinated system}
\end{figure}

Note that, depending on the initial heading angle $\pmb{\theta}(0)$, the circular sector $Co(S) \bigcap S_z$, as shown in Fig.~$1(c)$, can lie anywhere in $S_z$. Thus, for the sake of convenience and without loss of generality, a new coordinate system, as shown in Fig.~\ref{New coordinated system}, is defined by rotating the standard coordinate system by an angle $\theta_R \in [-\pi, \pi)$, which is chosen such that the real axis of this new coordinate system lies along that extreme ray of $Co(S)$ which will ensure that all the initial heading angles $\pmb{\theta}(0)$, in this new coordinate system, are non-negative (measured anti-clockwise from the new reference). Thus, in the new coordinates, we have
\begin{equation}
\label{transformation}\hat{\theta}_k = \theta_k - \theta_R
\end{equation}
as the heading angle of the $k^\text{th}$ agent.

Now, we state the following theorem, in which an expression for the reachable velocity direction $\theta_c$, is obtained.
\begin{thm}\label{Theorem3}
Consider $N$ agents, with dynamics given by \eqref{modelNew}, under the control law \eqref{control3} with $K_k < 0, \forall k$. The final velocity directions of all the agents having their initial heading angles $\pmb{\theta}(0)$, such that $Co(S)$ is an acute convex cone, converge to a common value $\theta_c$ given by
\begin{equation}
\label{theta_c}\theta_c = \left\{\left(\displaystyle\sum_{k=1}^{N} \dfrac{\hat{\theta}_{k0}}{K_k}\right){\Big /}\left({\displaystyle\sum_{k=1}^{N} \dfrac{1}{K_k}}\right)\right\} + \theta_R,
\end{equation}
where, $\hat{\theta}_{k0} = \theta_{k0} - \theta_R$ is the initial heading angle of the $k^\text{th}$ agent with respect to the new coordinate system, and $\theta_c$ is called a reachable velocity direction in the synchronized formation for this system of $N$ agents.
\end{thm}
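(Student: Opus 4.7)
My plan is to identify a conserved quantity of the heterogeneous Kuramoto-type dynamics \eqref{control3} and then combine it with Theorem~\ref{Theorem1} to pin down the synchronized direction. Before hunting for the invariant I would move into the rotated frame \eqref{transformation}, so that each $\hat\theta_{k0}\geq 0$ and, because $Co(S)$ is an acute convex cone, all initial phases lie in an arc of width strictly less than $\pi$. Since the right-hand side of \eqref{control3} depends only on relative angles $\theta_j-\theta_k=\hat\theta_j-\hat\theta_k$, the dynamics in these coordinates retain the same form $\dot{\hat\theta}_k=-(K_k/N)\sum_{j\neq k}\sin(\hat\theta_j-\hat\theta_k)$. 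By Lemma~\ref{lem1} the order parameter stays in $Co(S)\bigcap S_z$ for all $t$, and the discussion following \eqref{theta_dot_new} shows each $\hat\theta_k$ is continually pulled toward the instantaneous average phase $\Psi(t)$; this confines every $\hat\theta_k(t)$ to the same arc, keeps all pairwise differences in $(-\pi,\pi)$, and lets me treat the $\hat\theta_k$ as genuine real-valued functions rather than angles modulo $2\pi$.

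Next I would verify the conservation law
\[
I(t)\;=\;\sum_{k=1}^{N}\frac{\hat\theta_k(t)}{K_k}.
\]
Differentiating along the dynamics yields
\[
\dot I \;=\; \sum_{k=1}^{N}\frac{\dot{\hat\theta}_k}{K_k} \;=\; -\frac{1}{N}\sum_{k=1}^{N}\sum_{\substack{j=1\\ j\neq k}}^{N}\sin(\hat\theta_j-\hat\theta_k) \;=\; 0,
\]
where the last equality uses the antisymmetry of $\sin(\cdot)$ upon swapping the dummy indices $j$ and $k$ --- the same cancellation already exploited in \eqref{relation}. Hence $I(t)=I(0)=\sum_{k}\hat\theta_{k0}/K_k$ for all $t\geq 0$.

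Finally I would invoke Theorem~\ref{Theorem1}: because every $K_k<0$ the restrictive sufficient condition holds, so the system asymptotically stabilizes to a synchronized formation in which $\hat\theta_k(t)\to\hat\theta_c$ for a common real value $\hat\theta_c$ (with no modular ambiguity, thanks to the acute-cone confinement just established; note also that the saddle equilibria of Theorem~\ref{Theorem1} require two phases separated by $\pi$ and are therefore excluded by this confinement). Passing to the limit in $I(t)$ gives $\hat\theta_c\sum_{k}(1/K_k)=\sum_{k}\hat\theta_{k0}/K_k$, which can be uniquely solved for $\hat\theta_c$ since all $K_k$ share the same sign and hence $\sum_{k}1/K_k\neq 0$; adding $\theta_R$ to rotate back to the standard frame then yields \eqref{theta_c}. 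The main obstacle, and the step I would handle most carefully, is the containment claim in the first paragraph: without rigorously showing that \emph{every} $\hat\theta_k(t)$ (not just the order parameter) remains inside the acute cone for all $t$, the invariant $I$ could in principle jump by multiples of $2\pi/K_k$, severing the link between its conserved value and the final common direction.
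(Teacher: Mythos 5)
Your proposal is correct and follows essentially the same route as the paper: the conserved quantity $\sum_k \hat\theta_k(t)/K_k$ is exactly what the paper obtains by summing $\dot\theta_k/K_k$ and integrating (its equations \eqref{angle_relation}--\eqref{angle_relation1}), Lemma~\ref{lem1} is used identically to remove the modulo-$2\pi$ ambiguity, and the final substitution and rotation back by $\theta_R$ coincide. Your version is in fact slightly more explicit than the paper's on two points it leaves implicit --- invoking Theorem~\ref{Theorem1} for convergence to synchronization and noting that $\sum_k 1/K_k\neq 0$ because all gains share the same sign --- but these are refinements, not a different argument.
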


\begin{proof}
Taking the summation on both sides of \eqref{control3} over all $k=1, \ldots, N$, we get
\begin{equation}
\label{angle_relation}\sum_{k=1}^{N} \frac{\dot{\theta}_k(t)}{K_k} = -\frac{1}{N}\sum_{k=1}^{N}\sum_{\substack{j=1, \\ j \neq k}}^{N} \sin(\theta_j - \theta_k) = 0.
\end{equation}
Integration of \eqref{angle_relation} yields
\begin{equation}
\label{angle_relation1}\sum_{k=1}^{N} \frac{{\theta}_k(t)}{K_k} = \sum_{k=1}^{N} \frac{\theta_{k0}}{K_k},~~\forall t.
\end{equation}

Note that, in the new coordinates, the condition \eqref{angle_relation_synchronization} becomes
\begin{equation}
\label{angle_relation_synchronization_new}\hat{\theta}_1(t) = \hat{\theta}_2(t) = \hat{\theta}_3(t) =,\ldots, \hat{\theta}_N(t) = \hat{\theta}_c,
\end{equation}
where, Lemma~\ref{lem1} has been used to eliminate modulo $2\pi$ operation.

On substituting \eqref{angle_relation_synchronization_new} for all $k =1, \ldots, N$,  in \eqref{angle_relation1}, we get
\begin{equation}
\label{theta_c_new}\hat{\theta}_c = \left(\displaystyle\sum_{k=1}^{N} \dfrac{\hat{\theta}_{k0}}{K_k}\right){\Big /}\left({\displaystyle\sum_{k=1}^{N} \dfrac{1}{K_k}}\right),
\end{equation}
which is the reachable velocity direction with respect to the new coordinate system. Now, using transformation \eqref{transformation}, we get \eqref{theta_c} in the standard coordinate system.
\end{proof}

From \eqref{control5},
\begin{equation}
\label{angle_relation_limited}\sum_{k=1}^{N} \frac{\dot{\theta}_k(t)}{K_k} = -\sum_{k=1}^{N}\sum_{j \in \mathcal{N}_k} \sin(\theta_j - \theta_k) = 0,
\end{equation}
which implies that the result obtained in Theorem~\ref{Theorem3} (main result) also holds for the limited communication scenario. Now, based on Theorem~\ref{Theorem3}, we further obtain a few interesting results which equally hold for the limited communication scenario unless otherwise stated.

For the sake of simplicity, further analysis in this paper is carried out in the new coordinate system as shown in Fig.~\ref{New coordinated system}, which can be easily transformed to the standard coordinate system by using the transformation \eqref{transformation}.

\begin{cor}\label{cor3}
For the conditions given in Theorem~\ref{Theorem3}, the reachable velocity direction $\hat{\theta}_c$ given by \eqref{theta_c_new} is a convex combination of all the initial heading angles $\hat{\theta}_{k0}, \forall k$.
\end{cor}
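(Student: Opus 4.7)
The plan is essentially a direct unpacking of the formula \eqref{theta_c_new}. I would introduce the coefficients
\[
\alpha_k \;=\; \frac{1/K_k}{\displaystyle\sum_{j=1}^{N} 1/K_j}, \qquad k = 1,\ldots,N,
\]
so that \eqref{theta_c_new} takes the form $\hat{\theta}_c = \sum_{k=1}^N \alpha_k\,\hat{\theta}_{k0}$. The corollary then reduces to verifying the two defining properties of a convex combination: $\sum_k \alpha_k = 1$ and $\alpha_k \geq 0$ for every $k$.

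The first property is immediate from the definition, since the numerators $1/K_k$ sum to exactly the common denominator $\sum_{j=1}^N 1/K_j$. For the second property, I would invoke the standing hypothesis of Theorem~\ref{Theorem3} that $K_k < 0$ for all $k$. This makes each $1/K_k$ negative, hence the denominator $\sum_{j=1}^N 1/K_j$ is also negative, and $\alpha_k$ is the ratio of two negative quantities, giving $\alpha_k > 0$ (in fact strictly positive, not merely non-negative).

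There is no real obstacle here; the result is essentially a restatement of \eqref{theta_c_new} once the sign information on $K_k$ is exploited. I would close by noting that, since each $\alpha_k > 0$ strictly, the reachable direction $\hat{\theta}_c$ lies in the relative interior of the convex hull of $\{\hat{\theta}_{10}, \ldots, \hat{\theta}_{N0}\}$, which is consistent with Lemma~\ref{lem1}: $\hat{\theta}_c$ must lie inside the acute convex cone $Co(S)$ (expressed in the rotated frame), and the convex combination representation provides an explicit certificate of this containment in terms of the controller gains.
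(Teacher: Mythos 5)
Your proposal is correct and follows essentially the same route as the paper: both define the weights $\lambda_k = (1/K_k)\big/\big(\sum_{j=1}^{N} 1/K_j\big)$, note that they sum to one by construction, and use $K_k < 0,\ \forall k$ to conclude each weight is strictly positive. The closing observation about the relative interior is a harmless addition not present in the paper's proof.
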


\begin{proof}
Equation \eqref{theta_c_new} can also be rewritten as
\begin{equation}
\label{theta_c_complexHull}\hat{\theta}_c = \displaystyle\sum_{k=1}^{N} \left\{\left(\displaystyle\frac{1}{K_k}\right){\Big /}\left({\displaystyle\sum_{j=1}^{N} \frac{1}{K_j}}\right)\right\} \hat{\theta}_{k0}
\end{equation}
Assume that for all $k = 1, \ldots, N$,
\begin{equation}
\label{lambda}\lambda_k = \left(\displaystyle\frac{1}{K_k}\right){\Big /}\left(\displaystyle\sum_{j=1}^{N} \frac{1}{K_j}\right)
\end{equation}
Since $K_k < 0$ for all $k = 1, \ldots, N$, hence $\lambda_k > 0$ and $\sum_{k=1}^{N} \lambda_k = 1$. Substituting \eqref{lambda} in \eqref{theta_c_complexHull}, we get
\begin{equation}
\label{theta_c_complex_combination}\hat{\theta}_c = \displaystyle\sum_{k=1}^{N} \lambda_k\hat{\theta}_{k0},
\end{equation}
which shows that $\hat{\theta}_c$ is a convex combination of $\hat{\theta}_{k0}, \forall k$.
\end{proof}

\begin{cor}\label{cor4}
Let $\hat{\theta}_{m0} = \min_{k} \{\hat{\theta}_{k0}\} (= 0^\circ$ in the new coordinate system$)$ and $\hat{\theta}_{M0} = \max_{k}\{\hat{\theta}_{k0}\}$ be the angles corresponding to the extreme rays of $Co(S)$ under the conditions given in Theorem~\ref{Theorem3}. These angles are not reachable in the synchronized formation of $N$ agents.
\end{cor}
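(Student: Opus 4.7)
The approach is to invoke Corollary~\ref{cor3} directly and exploit the strict positivity of the convex weights $\lambda_k$. First, I would observe that under the hypotheses of Theorem~\ref{Theorem3} (in particular, $K_k < 0$ for all $k$), the numerator and denominator in the definition \eqref{lambda} of $\lambda_k$ carry the same sign, so $\lambda_k > 0$ for every $k$, and $\sum_k \lambda_k = 1$. Corollary~\ref{cor3} therefore gives the representation
\begin{equation*}
\hat{\theta}_c = \sum_{k=1}^{N} \lambda_k\, \hat{\theta}_{k0}, \qquad \lambda_k > 0, \qquad \sum_{k=1}^{N}\lambda_k = 1.
\end{equation*}
By the choice of the rotated coordinate system (Fig.~\ref{New coordinated system}), every initial angle satisfies $\hat{\theta}_{k0} \in [\hat{\theta}_{m0},\, \hat{\theta}_{M0}] = [0,\, \hat{\theta}_{M0}]$, and in the non-degenerate situation where $Co(S)$ has two distinct extreme rays we have $\hat{\theta}_{M0} > 0$.

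Next, to rule out $\hat{\theta}_c = \hat{\theta}_{m0} = 0$, I would note that the equation $\sum_k \lambda_k \hat{\theta}_{k0} = 0$ with $\lambda_k > 0$ and $\hat{\theta}_{k0} \ge 0$ forces $\hat{\theta}_{k0} = 0$ for every $k$, contradicting $\hat{\theta}_{M0} > 0$. Symmetrically, to rule out $\hat{\theta}_c = \hat{\theta}_{M0}$, I would use the bound
\begin{equation*}
\hat{\theta}_c \;=\; \sum_{k=1}^{N} \lambda_k\, \hat{\theta}_{k0} \;\le\; \hat{\theta}_{M0}\sum_{k=1}^{N}\lambda_k \;=\; \hat{\theta}_{M0},
\end{equation*}
where equality holds iff $\hat{\theta}_{k0} = \hat{\theta}_{M0}$ for every $k$; this is again precluded by $\hat{\theta}_{m0} = 0 < \hat{\theta}_{M0}$. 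Combining the two strict inequalities yields $\hat{\theta}_{m0} < \hat{\theta}_c < \hat{\theta}_{M0}$, so neither extreme angle is reachable.

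There is essentially no technical obstacle here; the core fact is the elementary one that a strict convex combination of real numbers lies strictly between the minimum and maximum of those numbers unless all entries coincide. The only point worth flagging is the implicit non-degeneracy $\hat{\theta}_{M0} > \hat{\theta}_{m0}$: if all agents start with a common heading then the system is already synchronized at that common direction and the statement is vacuous, so the corollary is naturally understood in the non-trivial case $\hat{\theta}_{M0} > 0$.
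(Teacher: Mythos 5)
Your proof is correct and follows essentially the same route as the paper's: the paper argues by contradiction that $\sum_{k\neq m}(\hat{\theta}_{k0}-\hat{\theta}_{m0})/K_k$ would have to vanish while every term is nonpositive and at least one is strictly negative, which is exactly your strict-convex-combination argument after dividing through by $\sum_{j=1}^{N} 1/K_j$. Your explicit handling of the degenerate case (all initial headings equal) is a small improvement, since the paper's claim that $\hat{\theta}_{k0}-\hat{\theta}_{m0}>0$ for every $k\neq m$ silently assumes the minimizer is unique, whereas the conclusion really only needs that not all initial angles coincide.
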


\begin{proof}
This can be proved by contradiction. Let us assume that $\hat{\theta}_{m0}$ is reachable. It means that $\exists~K_k < 0, \forall k$, such that \eqref{theta_c_new} is satisfied. Hence, from \eqref{theta_c_new}, we can write
\begin{equation}
\label{theta_m}\hat{\theta}_{m0} = \left(\displaystyle\sum_{k=1}^{N} \dfrac{\hat{\theta}_{k0}}{K_k}\right){\Big /}\left(\displaystyle\sum_{k=1}^{N} \dfrac{1}{K_k}\right),
\end{equation}
From which
\begin{equation}
\label{theta_m_1}\displaystyle\sum_{\substack{
   k = 1, \\
   k \neq m
  }}^{N}
\left(\dfrac{\hat{\theta}_{k0} - \hat{\theta}_{m0}}{K_k}\right) = 0.
\end{equation}
However, since $\hat{\theta}_{m0} = \min_{k} \{\hat{\theta}_{k0}\}$, $ \hat{\theta}_{k0} -\hat{ \theta}_{m0} > 0 $, for all $k = 1, \ldots, m-1, m+1, \ldots, N$. Thus,
\begin{equation}
\displaystyle\sum_{\substack{
   k = 1, \\
   k \neq m
  }}^{N}
\left(\dfrac{\hat{\theta}_{k0} - \hat{\theta}_{m0}}{K_k}\right) < 0
\end{equation}
as $K_k < 0, \forall k$, which contradicts \eqref{theta_m_1} and hence $\hat{\theta}_{m0}$ is not reachable. Similarly, we can show that $\hat{\theta}_{M0}$ is not reachable. This completes the proof.
\end{proof}

Now, we describe the following theorem which ensures the reachability of $\hat{\theta}_c$ in \eqref{theta_c_new} against heterogeneous controller gains $K_k < 0, \forall k$.

\begin{thm}\label{Theorem4}
Consider $N$ agents, with dynamics given by \eqref{modelNew}, under the control law \eqref{control3} with $K_k < 0, \forall k$. Let the initial heading angles of the agents be given by $\pmb{\theta}(0)$ such that $Co(S)$ is an acute convex cone. A final velocity direction $\hat{\theta}_c$, given by \eqref{theta_c_new}, of all the agents is reachable iff
\begin{equation}
\label{convex_cone_new} \hat{\theta}_c \in (\hat{\theta}_{m0}, \hat{\theta}_{M0}).
\end{equation}
\end{thm}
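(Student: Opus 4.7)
The plan is to split the proof into necessity and sufficiency. The necessity direction is essentially free from prior results: by Corollary~\ref{cor3}, any reachable $\hat{\theta}_c$ equals $\sum_k \lambda_k \hat{\theta}_{k0}$ for some convex weights $\lambda_k > 0$ summing to one, so $\hat{\theta}_c \in [\hat{\theta}_{m0}, \hat{\theta}_{M0}]$. By Corollary~\ref{cor4}, the two endpoints $\hat{\theta}_{m0}$ and $\hat{\theta}_{M0}$ themselves cannot be attained. Combining these two observations forces $\hat{\theta}_c$ into the open interval $(\hat{\theta}_{m0}, \hat{\theta}_{M0})$.

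For the sufficiency direction, given any target $\hat{\theta}_c \in (\hat{\theta}_{m0}, \hat{\theta}_{M0})$ I would explicitly construct admissible gains. Reparametrize by $\alpha_k = 1/K_k$; then equation \eqref{theta_c_new} becomes $\sum_k (\hat{\theta}_{k0}-\hat{\theta}_c)\alpha_k = 0$ with the sign constraint $\alpha_k<0$ for all $k$. Equivalently, after normalizing $\lambda_k = \alpha_k/\sum_j \alpha_j$ (which is positive because numerator and denominator have the same sign), the task reduces to finding $\lambda_k>0$ with $\sum_k \lambda_k = 1$ and $\sum_k \lambda_k \hat{\theta}_{k0} = \hat{\theta}_c$. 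Once such $\lambda_k$ are produced, setting $K_k = -1/\lambda_k < 0$ satisfies \eqref{theta_c_new} exactly, as a direct substitution verifies.

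The one nontrivial piece is therefore to exhibit strictly positive convex weights achieving the given $\hat{\theta}_c$ in the interior of the convex hull of $\{\hat{\theta}_{k0}\}$. A clean constructive choice: let $m$ and $M$ be the indices of the minimum and maximum initial angles, assign a small common weight $\lambda_k = \epsilon$ to every other agent, and pick $\lambda_m,\lambda_M$ from the resulting $2\times 2$ system
\begin{align*}
\lambda_m + \lambda_M &= 1-(N-2)\epsilon, \\
\lambda_m \hat{\theta}_{m0} + \lambda_M \hat{\theta}_{M0} &= \hat{\theta}_c - \epsilon\!\!\sum_{k\neq m,M}\!\!\hat{\theta}_{k0}.
\end{align*}
Solving yields $\lambda_M = \bigl[(\hat{\theta}_c-\hat{\theta}_{m0}) + O(\epsilon)\bigr]/(\hat{\theta}_{M0}-\hat{\theta}_{m0})$, which by hypothesis $\hat{\theta}_c \in (\hat{\theta}_{m0},\hat{\theta}_{M0})$ lies strictly in $(0,1)$ for all sufficiently small $\epsilon>0$; the analogous estimate holds for $\lambda_m$. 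Thus all $\lambda_k$ are strictly positive, the induced gains $K_k=-1/\lambda_k$ are all negative, and they realize the prescribed $\hat{\theta}_c$.

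The main obstacle is really just being careful about strictness: Corollary~\ref{cor3} alone would give a closed-interval characterization, and one must invoke Corollary~\ref{cor4} to exclude the endpoints on the necessity side, while on the sufficiency side one must ensure the constructed weights are strictly positive rather than merely nonnegative. Both issues are handled by the small-$\epsilon$ perturbation above, which explicitly stays in the relative interior of the simplex of convex combinations.
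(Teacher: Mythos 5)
Your proof is correct and follows essentially the same route as the paper: necessity from Corollary~\ref{cor3} plus Corollary~\ref{cor4}, and sufficiency by converting positive convex weights into negative gains via $K_k = c/\alpha_k$. The only difference is that the paper simply asserts the existence of strictly positive weights $\alpha_k$ realizing $\hat{\theta}_c$, whereas you construct them explicitly with the small-$\epsilon$ perturbation; that is a harmless (indeed welcome) elaboration of the same argument.
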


\begin{proof}
This directly follows from Corollary~\ref{cor3} and Corollary~\ref{cor4} that the reachable velocity $\hat{\theta}_c \in (\hat{\theta}_{m0}, \hat{\theta}_{M0})$, depending upon the heterogeneous gains $K_k < 0, \forall k$. The sufficiency condition is proved as follows.

Let $\hat{\theta}_c \in (\hat{\theta}_{m0}, \hat{\theta}_{M0})$. Then, we can find $\alpha_k$ such that
\begin{equation}\label{relation2}
\sum_{k=1}^{N} \alpha_k\hat{\theta}_{k0} = \hat{\theta}_c
\end{equation}
where, $\displaystyle\sum_{k=1}^{N} \alpha_k = 1$ with $\alpha_k > 0, \forall k$. Let us define
\begin{equation}
\label{gains}K_k = {c}{/}{\alpha_k}
\end{equation}
for all $k$, where $c < 0$ is any constant. Thus, $K_k < 0, \forall k$, and $\sum_{k=1}^{N} ({1}/{K_k}) = {1}/{c}$.
Replacing $\alpha_k$ by $K_k$ in \eqref{relation2}, we get
\begin{equation}
\nonumber \hat{\theta}_c = \sum_{k=1}^{N} \left\{\dfrac{\dfrac{1}{K_k}}{\dfrac{1}{c}}\right\}\hat{\theta}_{k0} =\sum_{k=1}^{N} \left\{\left(\dfrac{\dfrac{1}{K_k}}{\displaystyle\sum_{j=1}^{N}\dfrac{1}{K_j}}\right)\hat{\theta}_{k0}\right\} = \dfrac{\displaystyle\sum_{k=1}^{N} \dfrac{\hat{\theta}_{k0}}{K_k}}{\displaystyle\sum_{k=1}^{N} \dfrac{1}{K_k}},
\end{equation}
which is the same as \eqref{theta_c_new}. This completes the proof.
\end{proof}

\begin{remark}
If we choose homogeneous controller gains, as in \cite{Sepulchre2007}, that is, $K_k = K, \forall k$, then the reachable velocity direction $\hat{\theta}_c$, by using \eqref{theta_c_new}, is given by
\begin{equation}
\label{theta_c_avg}\overline{\hat{\theta}}_c = \frac{1}{N}\displaystyle\sum_{k=1}^{N} \hat{\theta}_{k0},
\end{equation}
which is the average of all the initial heading angles of $N$ agents. Thus, by using homogeneous controller gains, only the average consensus in initial heading angles is possible. However, by using heterogeneous controller gains, we are able to expand the reachable set of the final velocity direction $\hat{\theta}_c$. In fact, the agents can be made to converge to any desired common velocity direction $\hat{\theta}_c \in (\hat{\theta}_{m0}, \hat{\theta}_{M0})$ by suitably selecting the heterogeneous gains $K_k < 0, \forall k$. These heterogeneous gains can be selected according to \eqref{gains}. We can see that these gains are not unique since none of $\alpha_k$ and $c$ need be unique. We also observe that \eqref{theta_c_new} is independent of the initial locations of the agents. Therefore, different groups of the agents, with arbitrary initial locations, but with same individual initial velocity directions, can be made to converge to the same desired direction $\hat{\theta}_c \in (\hat{\theta}_{m0}, \hat{\theta}_{M0})$.
\end{remark}

Since it is physically impossible to get the same gains for all the agents, the idea of heterogeneous controller gains was introduced. Suppose the homogeneous gains $K$ of each agent vary within certain limits while obeying all the conditions for convergence, then we have the following theorem, which tells about the deviation of the final velocity direction $\hat{\theta}_c$ from its mean value $\overline{\hat{\theta}}_c$ given by \eqref{theta_c_avg}, and comments on its reachability.

\begin{thm}\label{Theorem5}
Let there be an error of $\epsilon_k = \eta_kK$, where $0 \leq \eta_k < 1$, in the gain $K$ of the $k^\text{th}$ agent, with dynamics given by \eqref{modelNew}, under the control law \eqref{control3} with $K_k = K < 0, \forall k$. Let $\eta = \max_{k}\{\eta_k\}$ be the maximum error, and the initial heading angles of the agents be given by $\pmb{\theta}(0)$ such that $Co(S)$ is an acute convex cone. Then, in the synchronized formation of this system of $N$ agents, the perturbed final velocity direction
\begin{equation}
\label{error_angle}\hat{\theta}^p_c \in \left(\hat{\theta}_{m0}, \hat{\theta}_{M0}\right) \bigcap \left[\overline{\hat{\theta}}_c - \Delta\hat{\theta}^l_c, \overline{\hat{\theta}}_c + \Delta\hat{\theta}^u_c\right],
\end{equation}
where,
\begin{equation}
\Delta\hat{\theta}^l_c =  \left(\frac{2\eta}{1 + \eta}\right)\overline{\hat{\theta}}_c,~~\text{and}~~\Delta\hat{\theta}^u_c =  \left(\frac{2\eta}{1 - \eta}\right)\overline{\hat{\theta}}_c,
\end{equation}
are, respectively, the maximum values of the lower and upper deviations of the reachable velocity direction from its mean value $\overline{\hat{\theta}}_c$ given by \eqref{theta_c_avg}.
\end{thm}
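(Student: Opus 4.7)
The plan is to start from the closed-form expression \eqref{theta_c_new} furnished by Theorem~\ref{Theorem3}, applied to the perturbed gains $K_k = K + \epsilon_k$. Writing $K_k = K(1+\delta_k)$ with $|\delta_k|\leq \eta<1$, each $K_k$ remains strictly negative, so Theorem~\ref{Theorem3} still applies. The common factor $1/K$ then cancels between numerator and denominator of \eqref{theta_c_new}, and one obtains the convex combination
\[
\hat{\theta}^p_c \;=\; \sum_{k=1}^{N} \lambda_k\, \hat{\theta}_{k0}, \qquad \lambda_k \;=\; \frac{1/(1+\delta_k)}{\sum_{j=1}^{N} 1/(1+\delta_j)},
\]
with $\lambda_k>0$ and $\sum_k\lambda_k = 1$, echoing Corollary~\ref{cor3}.

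Next, I would bound each $\lambda_k$ uniformly. Since $1/(1+\delta_k)\in[1/(1+\eta),\,1/(1-\eta)]$, the two extremal configurations (one $\delta_k=-\eta$ with all others at $\delta_j=\eta$, and vice versa) give
\[
\lambda_k \;\in\; \left[\,\frac{1-\eta}{N(1+\eta)},\;\frac{1+\eta}{N(1-\eta)}\,\right].
\]

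The step that makes the whole proof work is the observation that, by the construction of the rotated frame in Fig.~\ref{New coordinated system}, every initial heading satisfies $\hat{\theta}_{k0}\geq 0$. Using the non-negativity of $\hat{\theta}_{k0}$, the termwise inequality $\lambda_k\leq \lambda_{\max}$ integrates into
\[
\hat{\theta}^p_c \;\leq\; \frac{1+\eta}{N(1-\eta)} \sum_{k=1}^{N} \hat{\theta}_{k0} \;=\; \frac{1+\eta}{1-\eta}\,\overline{\hat{\theta}}_c,
\]
and symmetrically $\hat{\theta}^p_c \geq \frac{1-\eta}{1+\eta}\overline{\hat{\theta}}_c$. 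Subtracting $\overline{\hat{\theta}}_c$ from each inequality produces the asymmetric deviations $\Delta\hat{\theta}^u_c = \frac{2\eta}{1-\eta}\overline{\hat{\theta}}_c$ and $\Delta\hat{\theta}^l_c = \frac{2\eta}{1+\eta}\overline{\hat{\theta}}_c$, yielding the bracketed interval in \eqref{error_angle}.

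Finally, since every $K_k$ is strictly negative, Theorem~\ref{Theorem4} constrains $\hat{\theta}^p_c\in(\hat{\theta}_{m0},\hat{\theta}_{M0})$; intersecting with the deviation interval gives the stated set. The main subtlety will not be algebraic but interpretive: (i) the perturbation $\epsilon_k$ must be read as two-sided so that $K_k$ can lie anywhere in $[K(1+\eta),\,K(1-\eta)]$, which is what forces the asymmetric factors $2\eta/(1\pm\eta)$; and (ii) I would carefully justify using the uniform bound $\lambda_k\leq\lambda_{\max}$ for every $k$ even though $\lambda_{\max}$ is not attained simultaneously for all indices --- its legitimacy is precisely the non-negativity of $\hat{\theta}_{k0}$ in the rotated frame, and the resulting bound is valid (if not always tight).
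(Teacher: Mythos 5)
Your proposal is correct and follows essentially the same route as the paper: both substitute $\epsilon_k=\eta_k K$ so the factor $K$ cancels, bound the numerator and denominator of \eqref{theta_c_new} separately using the extremes $1\pm\eta$, exploit the non-negativity of $\hat{\theta}_{k0}$ in the rotated frame to justify the (non-simultaneously-attainable) termwise bounds, and intersect the resulting interval with $(\hat{\theta}_{m0},\hat{\theta}_{M0})$ via Theorem~\ref{Theorem4}. Your reformulation through the convex weights $\lambda_k$ is a slightly cleaner packaging of the same decoupling argument, and writing $K_k=K(1+\delta_k)$ sidesteps the sign bookkeeping the paper glosses over, but it is not a different proof.
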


\begin{figure*}
\centering
\subfigure[]{\includegraphics[scale=0.45]{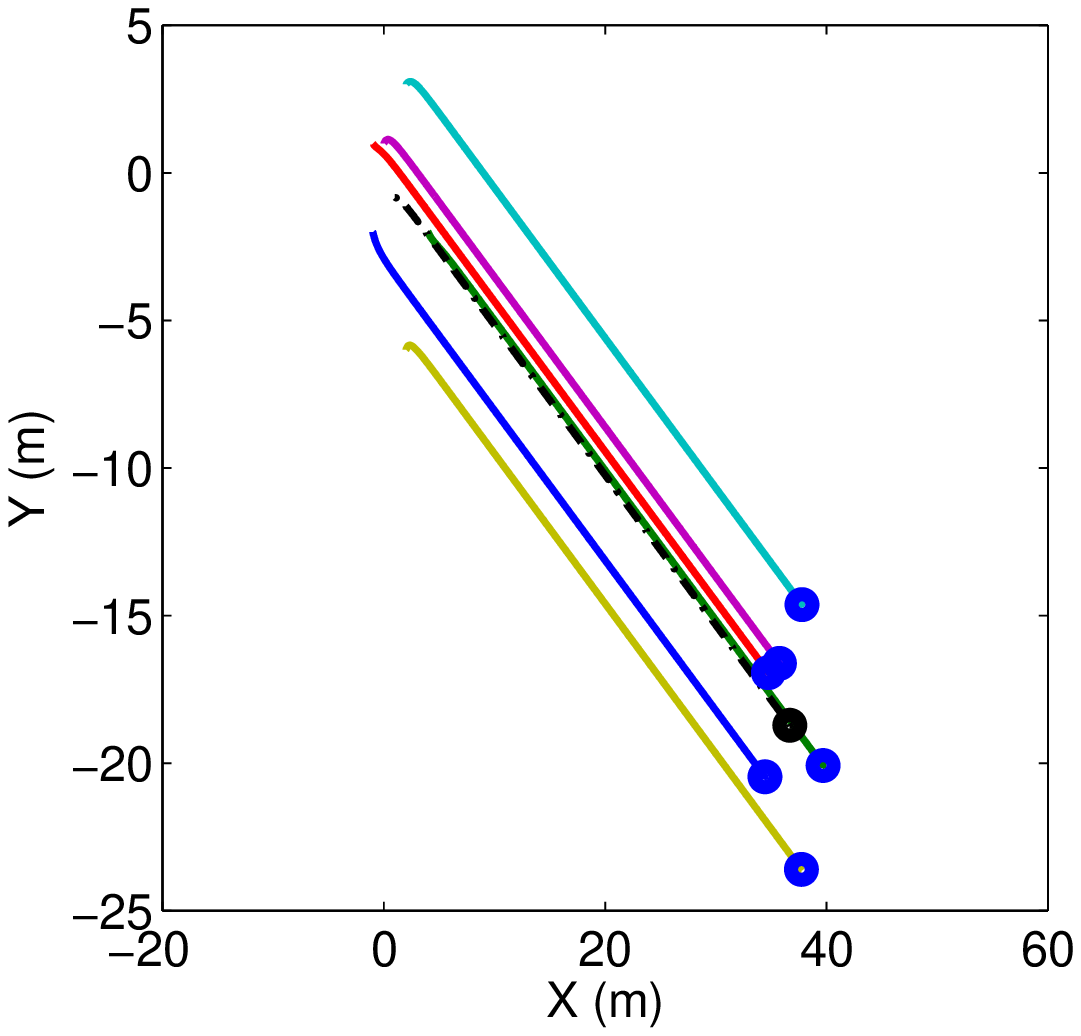}}\hspace{1cm}
\subfigure[]
{\includegraphics[scale=0.45]{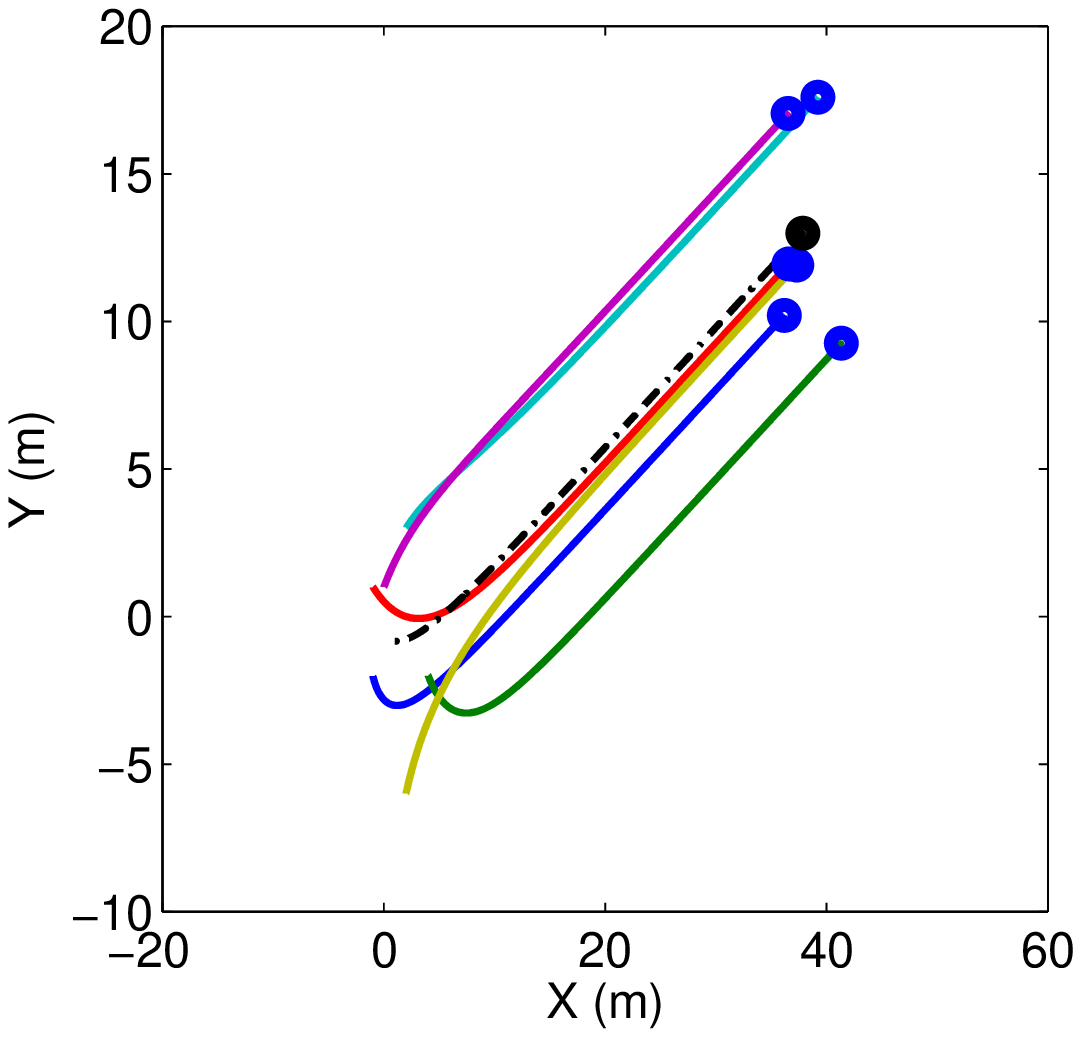}}\hspace{1cm}
\subfigure[]
{\includegraphics[scale=0.45]{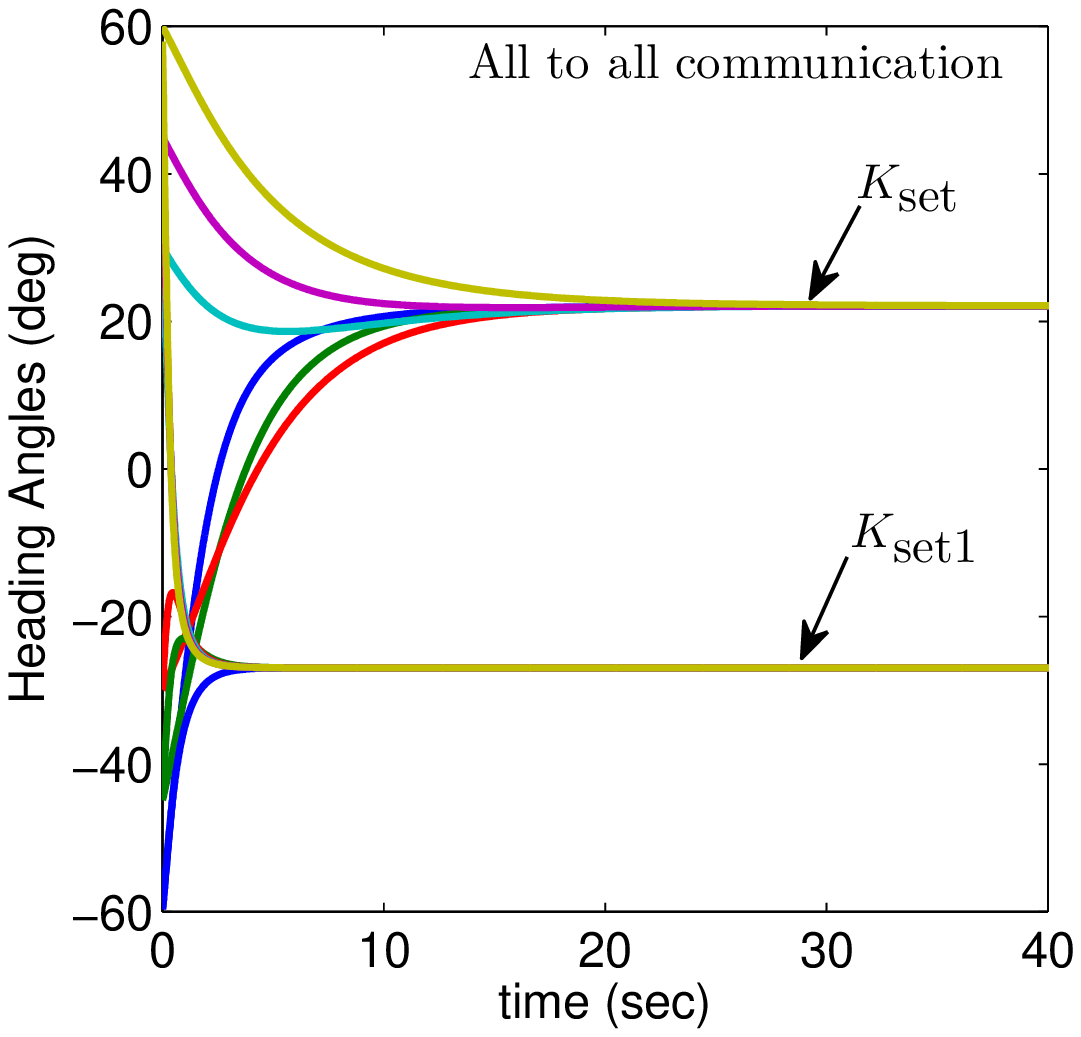}}\hspace{1cm}
\subfigure[]
{\includegraphics[scale=0.45]{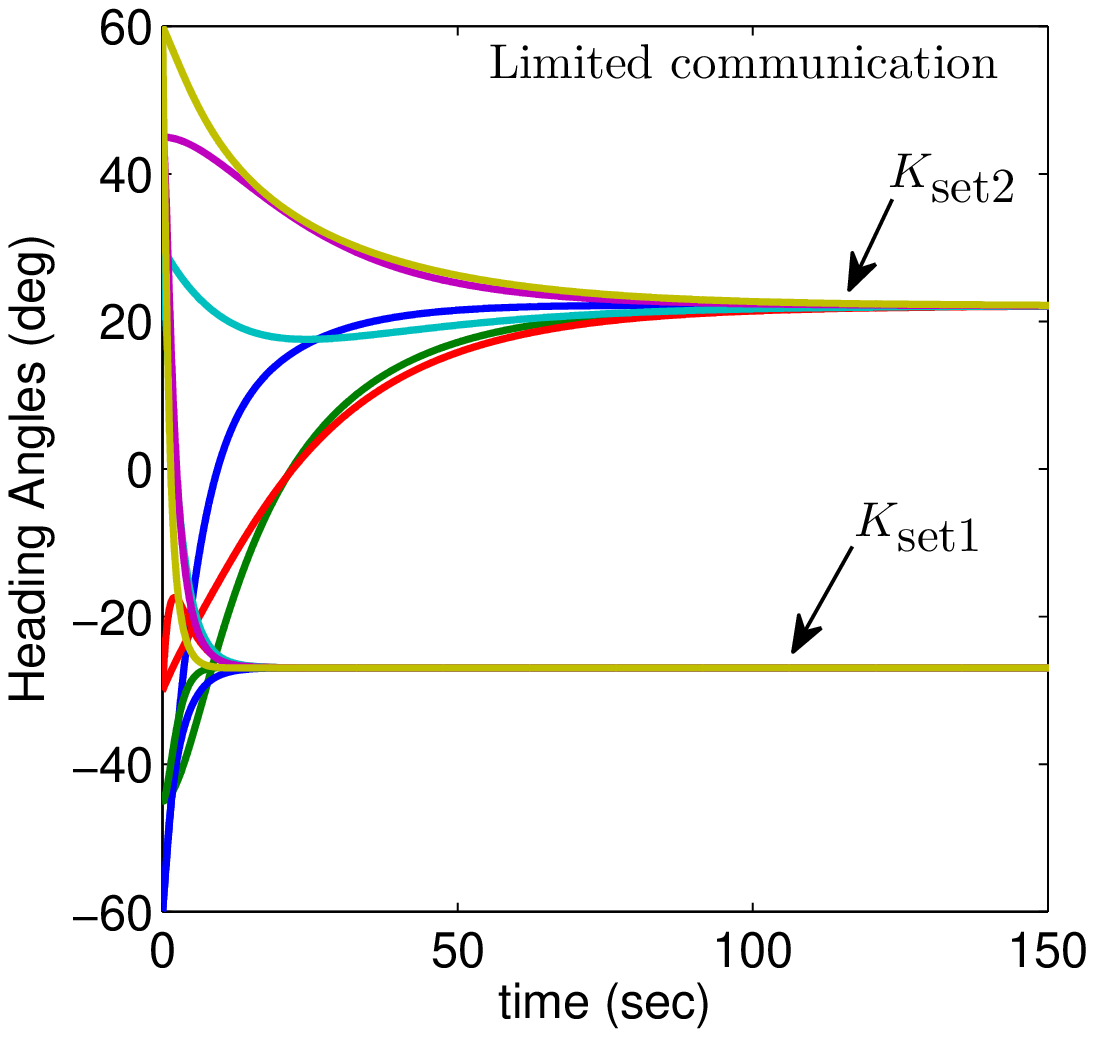}}
\caption{Synchronization of $N = 6$ agents for $\omega_0 = 0$ under the control laws \eqref{control3} and \eqref{control5}. $(a)$ Trajectories of the agents with $K_\text{set1} = \{K_k = -k,~k = 1, \ldots, 6\}$. $(b)$ Trajectories of the agents with $K_\text{set2} = \{K_k = -1/k,~k = 1, \ldots, 6\}$. $(c)$ Consensus of heading angles for the gains $K_\text{set1}$ and $K_\text{set2}$ under all-to-all communication. $(d)$ Consensus of heading angles for the gains $K_\text{set1}$ and $K_\text{set2}$ under limited communication.}
\label{Synchronization1}
\end{figure*}
\begin{figure*}
\centering
\subfigure[]{\includegraphics[scale=0.45]{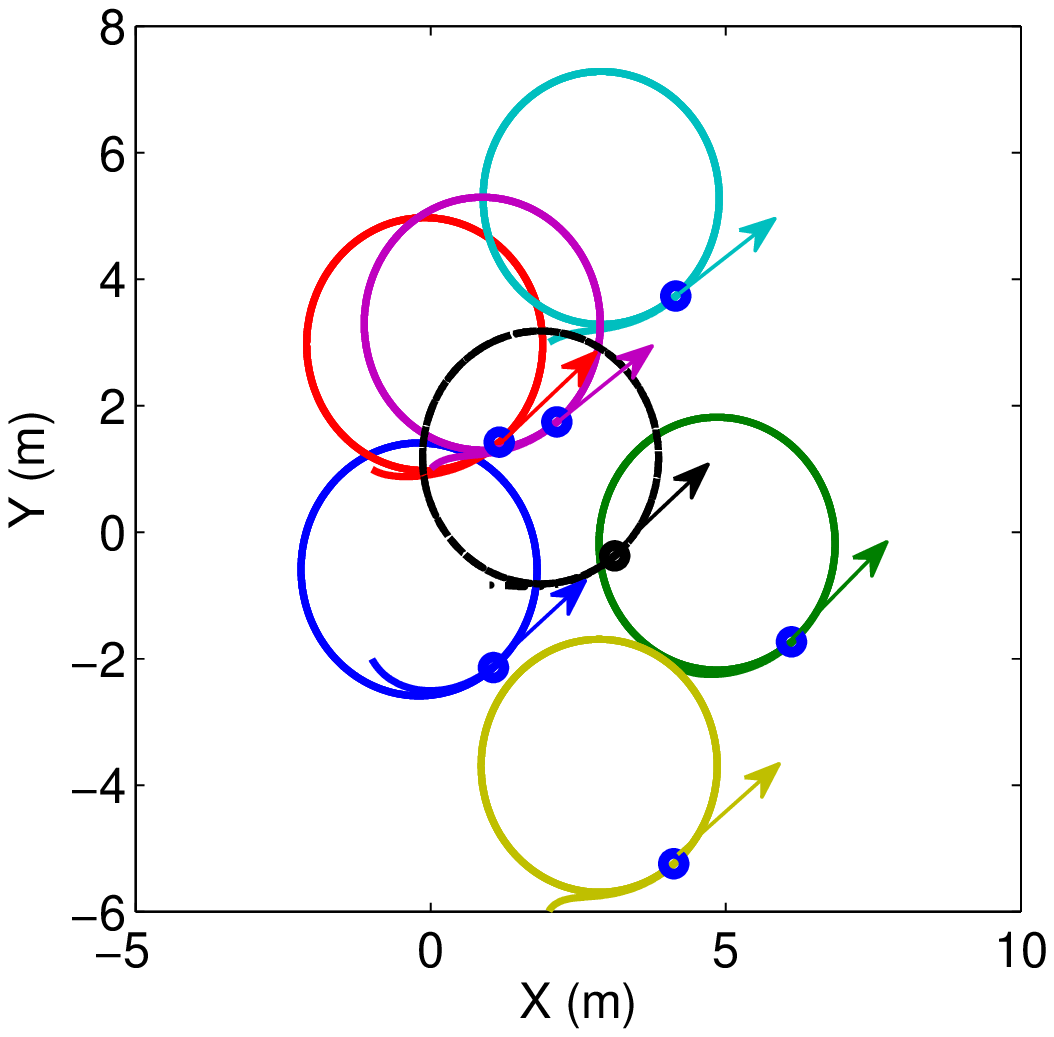}}\hspace{1cm}
\subfigure[]
{\includegraphics[scale=0.45]{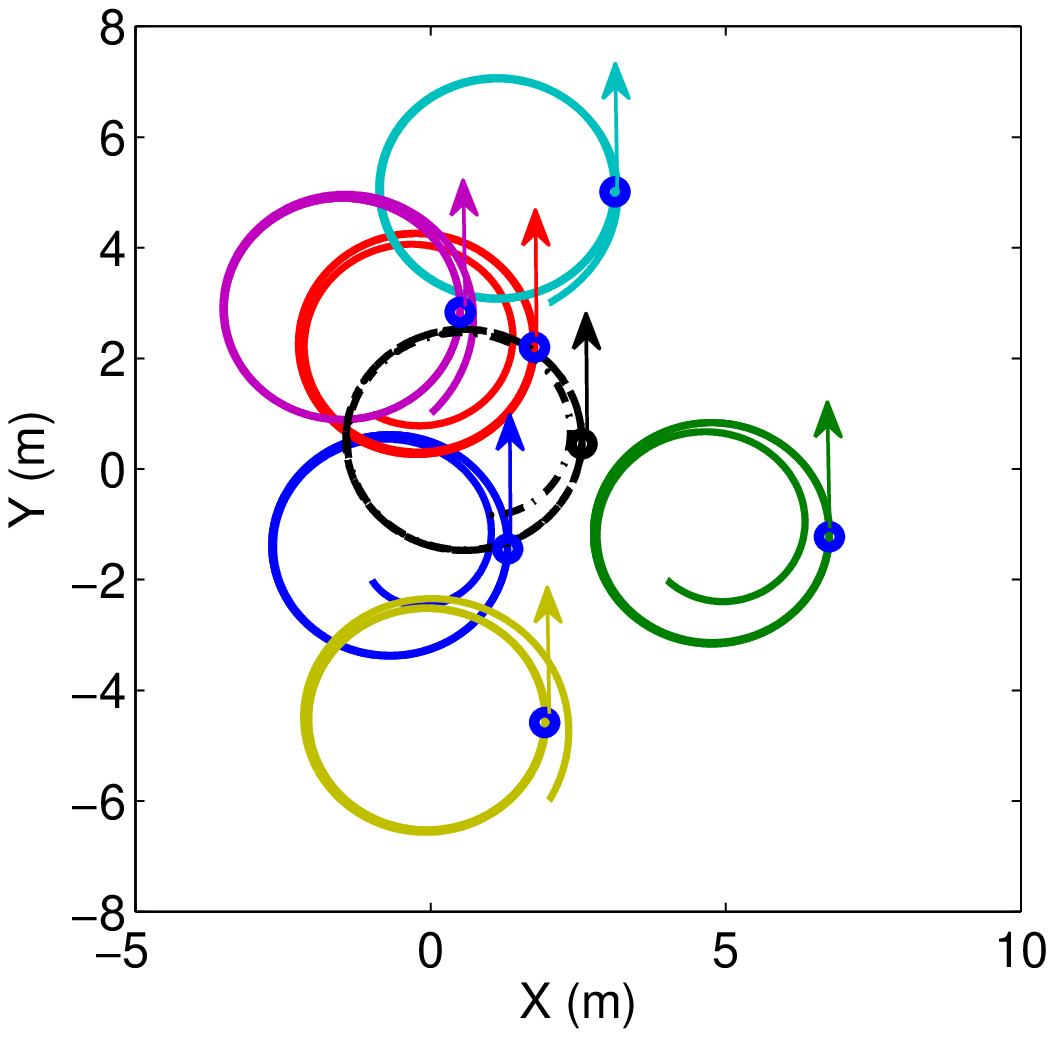}}\hspace{1cm}
\subfigure[]
{\includegraphics[scale=0.45]{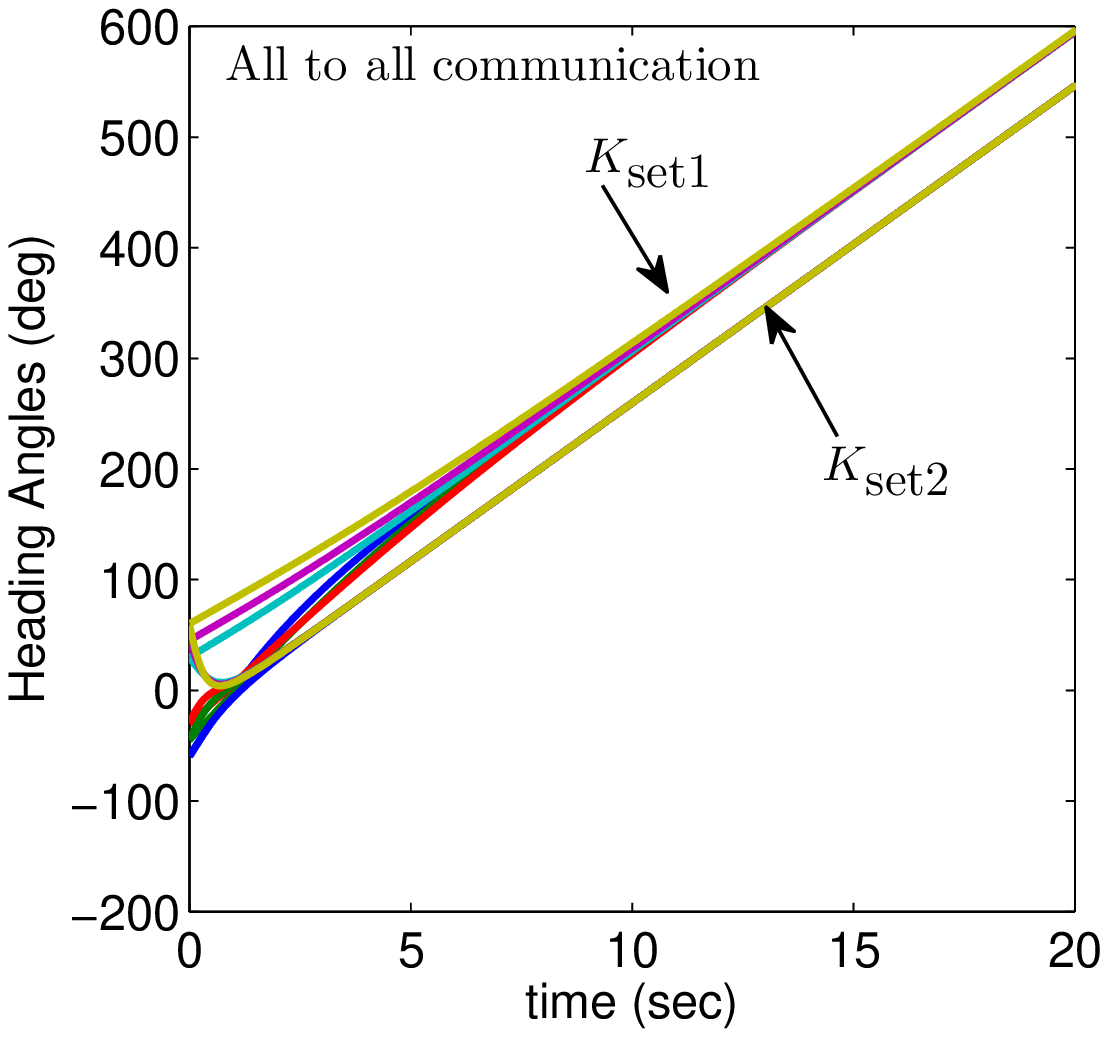}}\hspace{1cm}
\subfigure[]
{\includegraphics[scale=0.45]{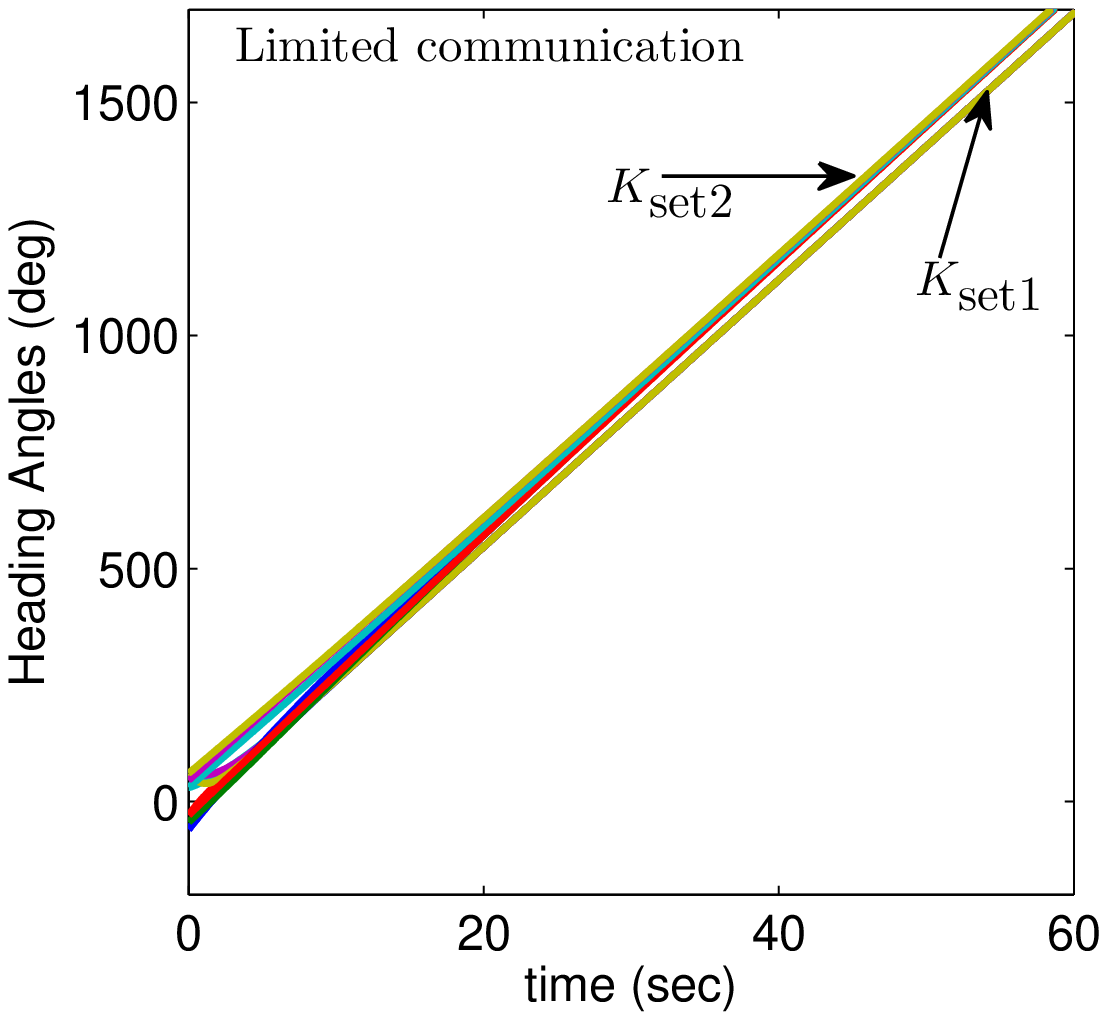}}
\caption{Synchronization of $N = 6$ agents for $\omega_0 = 0.5~\text{rad/sec}$ under the control laws \eqref{control2} and \eqref{control6}. $(a)$ Trajectories of the agents with $K_\text{set1}$. $(b)$ Trajectories of the agents with $K_\text{set2}$. $(c)$ Consensus of heading angles for the gains $K_\text{set1}$ and $K_\text{set2}$ under all-to-all communication. $(d)$ Consensus of heading angles for the gains $K_\text{set1}$ and $K_\text{set2}$ under limited communication.}
\label{Synchronization2}
\end{figure*}

\begin{proof}
Since the erroneous controller gain of the $k^\text{th}$ agent is $K \pm \epsilon_k$, by using \eqref{theta_c_new}, we can write
\begin{equation}
\label{error1}\hat{\theta}^p_c  = \left(\displaystyle\sum_{k=1}^{N} \frac{\hat{\theta}_{k0}}{K \pm \epsilon_k}\right) {\Big /} \left(\displaystyle\sum_{k=1}^{N} \frac{1}{K \pm \epsilon_k}\right).
\end{equation}
Since $\hat{\theta}_{k0}, \forall k$ are non-negative in the new coordinate system, the lower bound of $\hat{\theta}^p_c$, denoted by $\hat{\theta}^l_c$, is given by
\begin{equation}
\label{error2}\hat{\theta}^l_c  = \left(\displaystyle\sum_{k=1}^{N} \frac{\hat{\theta}_{k0}}{K + \epsilon_k}\right) {\Big /} \left(\displaystyle\sum_{k=1}^{N} \frac{1}{K - \epsilon_k}\right).
\end{equation}
Substituting $\epsilon_k = \eta_kK$ in \eqref{error2}, we get
\begin{eqnarray}
\hat{\theta}^l_c  &=& \left(\displaystyle\sum_{k=1}^{N} \frac{\hat{\theta}_{k0}}{1 + \eta_k}\right) {\Big /} \left(\displaystyle\sum_{k=1}^{N} \frac{1}{1 - \eta_k}\right)\\
& \geq & \left(\displaystyle\sum_{k=1}^{N} \frac{\hat{\theta}_{k0}}{1 + \eta}\right) {\Big /} \left(\frac{N}{1 - \eta}\right) =  \left(\frac{1 - \eta}{1 + \eta}\right)\overline{\hat{\theta}}_c.
\end{eqnarray}
Similarly, the upper bound of $\hat{\theta}^p_c$ is given by
\begin{eqnarray}
\hat{\theta}^u_c  &=& \left(\displaystyle\sum_{k=1}^{N} \frac{\hat{\theta}_{k0}}{1 - \eta_k}\right) {\Big /} \left(\displaystyle\sum_{k=1}^{N} \frac{1}{1 + \eta_k}\right)\\
& \geq & \left(\displaystyle\sum_{k=1}^{N} \frac{\hat{\theta}_{k0}}{1 - \eta}\right) {\Big /} \left(\frac{N}{1 + \eta}\right) = \left(\frac{1 + \eta}{1 - \eta}\right)\overline{\hat{\theta}}_c.
\end{eqnarray}
Thus, the maximum values of the lower and upper deviations of $\hat{\theta}^p_c$ from its mean value $\overline{\hat{\theta}}_c$ are, respectively,
\begin{eqnarray}
\Delta\hat{\theta}^l_c &=&  \overline{\hat{\theta}}_c - \left(\frac{1 - \eta}{1 + \eta}\right)\overline{\hat{\theta}}_c = \left(\frac{2\eta}{1 + \eta}\right)\overline{\hat{\theta}}_c\\
\Delta\hat{\theta}^u_c &=&  \left(\frac{1 + \eta}{1 - \eta}\right)\overline{\hat{\theta}}_c - \overline{\hat{\theta}}_c = \left(\frac{2\eta}{1 - \eta}\right)\overline{\hat{\theta}}_c.
\end{eqnarray}
It follows from the above discussion that
\begin{equation}
\hat{\theta}^p_c \in \left[\overline{\hat{\theta}}_c - \Delta\hat{\theta}^l_c, \overline{\hat{\theta}}_c + \Delta\hat{\theta}^u_c\right].
\end{equation}
However, since Theorem~\ref{Theorem4} ensures that $\hat{\theta}^p_c \in (\hat{\theta}_{m0}, \hat{\theta}_{M0})$ when there is heterogeneity in the controller gains, the actual set of angles reachable by $\hat{\theta}^p_c$ is \eqref{error_angle}. This completes the proof.
\end{proof}

\subsection{Case~2: $\omega_0 \neq 0$}
In this case, the motion of each agent is governed by \eqref{control2}. Thus, at equilibrium, the agents move in synchronization around their individual circular orbits at an angular frequency $\omega_0$. It implies that, in the steady state, the order parameter $p_\theta$ given by \eqref{phase_order_parameter} has constant, unit length and rotates at a constant angular frequency $\omega_0$. For ease of analysis in this framework, it is convenient to use a frame of reference that rotates at the same frequency $\omega_0$ so that $p_\theta$ remains stationary at the equilibrium where the system of agents synchronize. Thus, by replacing $\theta_k \rightarrow \theta_k + \omega_0 t$ in \eqref{control2}, which corresponds to a rotating frame at frequency $\omega_0$, we get the turn rate of the $k^\text{th}$ agent as
\begin{equation}
\dot{\theta}_k = -\frac{K_k}{N}\sum_{\substack{j=1, \\ j \neq k}}^{N}\sin(\theta_j - \theta_k),
\end{equation}
which is the same as \eqref{control3}. Therefore, all the analysis remains unchanged in a rotating frame of reference, and hence omitted.

{\it Simulation~1:} In this simulation, we consider $N=6$ agents with their initial positions and initial heading angles in the standard coordinate system, $\pmb{r}(0) = [(-1, -2), (4, -2), (-1, 1), (2, 3), (0, 1), (2, -6) ]^T$ and $\pmb{\theta}(0) = [-60^\circ, -45^\circ, -30^\circ, 30^\circ, 45^\circ, 60^\circ]^T$, respectively. Although the initial locations of the agents are given for representing the trajectories of the agents in the simulation, the locations themselves are not important so far as the objective of synchronization is concerned. Even with different locations, the convergence properties will be the same, although the trajectories will be different.

To account for the limited communication constraints, we present the simulations for a connected interaction network in which each agent is connected to its two neighbors only in a cyclic manner \cite{Marshall2004}.

In Fig.~\ref{Synchronization1}, synchronization of agents for the two sets of gains $K_\text{set1} = \{K_k = -k,~k = 1, \ldots, 6\}$, and $K_\text{set2} = \{K_k = -1/k,~k = 1, \ldots, 6\}$ is shown for $\omega_0 = 0$ under the controls \eqref{control3} and \eqref{control5}. The trajectories of the agents, in Figs.~$4(a)$ and $4(b)$, are shown only for the all-to-all communication scenario, and are similar for the limited communication case, and hence not shown. In all figures in this paper, the trajectory of the centroid is shown by a broken black line. The consensus in the heading angles of the agents for the two sets of gains is shown in Figs.~$4(c)$ and $4(d)$ for both types of communication scenarios, which indicates that different final velocity directions are achievable by using heterogeneous controller gains. Note that the convergence rate of the heading angles is faster under all-to-all interaction as expected.

Fig.~\ref{Synchronization2} depicts the synchronization of agents for the two sets of gains $K_\text{set1}$, and $K_\text{set2}$ for $\omega_0 = 0.5~\text{rad/sec}$ under the controls \eqref{control2} and \eqref{control6}. Here, all the agents, at any instant in time, are in synchronization, and move around individual circles of radius $\rho_0 = |\omega_0|^{-1} = 2~\text{m}$. The trajectories of the agents, in Figs.~$5(a)$ and $5(b)$, are again shown for the all-to-all communication scenario, and are similar for limited communication case.  In this case, since the agents continue to rotate around individual circles in a synchronized fashion, the final velocity direction keeps increasing with time, as shown in Figs.~$5(c)$ and $5(d)$ for both types of communication scenarios.

\section{A Special Case of Two Agents}
In this section, we address the special case of two agents and show that, unlike $K_k < 0, \forall k$, their exists a less restrictive condition on the heterogeneous gains $K_k$, which results in further expansion of the reachable set of the final velocity direction of the agents in their synchronized formation. We present the results only for $\omega_0 = 0$ since the analysis is unchanged for $\omega_0 \neq 0$ in a rotating frame of reference by redefining $\theta_k \rightarrow \theta_k + \omega_0 t$ for the $k^\text{th}$ agent.

For $N=2$, the time derivative of the potential function ${U}(\pmb{\theta})$ from \eqref{Udot_final} is given by
\begin{equation}
\label{U_dot_two_agents}\dot{U}(\pmb{\theta}) \big{|}_{N=2} = \frac{1}{2^2}(K_1 + K_2) \sin^2(\theta_2-\theta_1),
\end{equation}
which implies that the potential ${U}(\pmb{\theta})$ decreases if $K_1 + K_2 < 0$ since $\sin^2(\theta_2-\theta_1) > 0$. Moreover, it is easy to verify that $\sin^2(\theta_2-\theta_1) = 0$, only for the trivial cases when both the agents are already synchronized or balanced.

Thus, by using Theorem~\ref{Theorem1}, it follows from \eqref{U_dot_two_agents} that $K_1 + K_2 < 0$ is a sufficient condition to asymptotically stabilize the synchronized formation of $N=2$. Therefore, synchronized formation of $N=2$ is achievable for both positive and negative values of gains $K_1$ and $K_2$ provided that $K_1 + K_2 < 0$. Note that as there is only one communication link, both all-to-all and limited communication topologies are the same for $N=2$.


\begin{figure}[!t]
\centering
\includegraphics[scale=0.45]{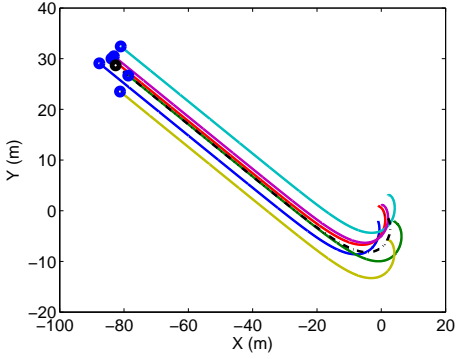}
\caption{Synchronization of $N=6$ agents under the control law \eqref{control3} with ${K}_\text{set3}$. Note that the final velocity direction lies outside the conic hull.}
\label{Synchronization3}
\end{figure}


\begin{remark}
For $N>2$, we did not come up with a simplified expression for the sufficient condition on the controller gains $K_k$, however, simulation results show that their exists a combination of both positive and negative values of the controller gains $K_k$ that gives rise to a synchronized formation with an extended set of reachable velocity direction $\hat{\theta}_c$. For example, the final velocity direction of the $6$ agents considered in Simulation~1 lies outside the conic hull for the set of gains
${K}_\text{set3} = \{K_1 = 0.5,~K_k = -k,~k=2, \ldots,6.\}$, and is shown in Fig.~\ref{Synchronization3}.
\end{remark}

Now, we state the following theorem, which says that the reachable set of the final velocity direction of the two agents in synchronization further expands when both positive and negative values of gains $K_1$ and $K_2$, satisfying $K_1 + K_2 < 0$, are selected.

\begin{thm}\label{Theorem6}
Consider 2 agents, with dynamics given by \eqref{modelNew}, under the control law \eqref{control3}. Then, any $\hat{\theta}_c \in [-\pi, \pi]$, which is the final velocity direction at which the system of agents synchronizes, is reachable iff there exist controller gains $K_1$ and $K_2$ such that $K_1 + K_2 < 0$.
\end{thm}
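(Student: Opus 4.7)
The plan is to exploit the conservation law that was established (but only invoked for $K_k<0$) in the proof of Theorem~\ref{Theorem3}, together with the sharp two-agent Lyapunov calculation displayed in \eqref{U_dot_two_agents}. The key observation is that the identity $\sum_k \dot{\theta}_k/K_k = 0$ from \eqref{angle_relation} and its integrated form \eqref{angle_relation1} rely only on the skew-symmetry of $\sin(\theta_j-\theta_k)$ and the nonvanishing of the gains; the signs of $K_1,K_2$ play no role in deriving them. This means the same formula for $\hat{\theta}_c$ continues to hold for mixed-sign gains, and we are free to exploit that extra flexibility.

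For the necessity direction, I would argue that if the two agents converge to a synchronized configuration from non-synchronized initial conditions, then $U(\pmb{\theta})$ must strictly decrease along the trajectory. Specializing \eqref{U_dot_two_agents}, this requires $(K_1+K_2)\sin^2(\theta_2-\theta_1)<0$, and since $\sin^2(\theta_2-\theta_1)>0$ off the synchronized/balanced set, we must have $K_1+K_2<0$. The borderline $K_1+K_2=0$ is discarded because it renders $\dot{U}\equiv 0$, freezing the dynamics away from synchronization.

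For the sufficiency direction, given any pair with $K_1+K_2<0$, Theorem~\ref{Theorem1} applied via \eqref{U_dot_two_agents} guarantees asymptotic stabilization to the synchronized formation, and evaluating \eqref{angle_relation1} at $\theta_1=\theta_2=\hat{\theta}_c$ yields
\[
\hat{\theta}_c \;=\; \frac{K_2\,\hat{\theta}_{10}+K_1\,\hat{\theta}_{20}}{K_1+K_2}.
\]
Given a target $\hat{\theta}_c\in[-\pi,\pi]$, I would solve this for the ratio $K_1/K_2=(\hat{\theta}_{10}-\hat{\theta}_c)/(\hat{\theta}_c-\hat{\theta}_{20})$ and then pick the overall scale and sign to enforce $K_1+K_2<0$. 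If $\hat{\theta}_c$ lies strictly between $\hat{\theta}_{10}$ and $\hat{\theta}_{20}$ the ratio is positive and both gains can be taken negative, recovering the convex-combination regime of Corollary~\ref{cor3}. If $\hat{\theta}_c$ lies outside the closed interval of initial headings the ratio is negative, so one gain is positive and one is negative; a direct check shows that rescaling by $\pm 1$ always produces a pair with negative sum. Thus every nondegenerate target is realized by an explicit, and in fact one-parameter family of, gain choices.

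The main obstacle, and the one place where the argument is only soft, is the boundary case $\hat{\theta}_c\in\{\hat{\theta}_{10},\hat{\theta}_{20}\}$: these values formally force one of the gains to vanish, violating the standing hypothesis $K_k\neq 0$, so they are reachable only as limits of admissible gain sequences. Handling this cleanly requires either a continuity/closure interpretation of ``reachable'' or a separate remark; every other angle in $[-\pi,\pi]$ is exactly reachable, which is the content of the theorem and constitutes the genuine expansion relative to Theorem~\ref{Theorem4}.
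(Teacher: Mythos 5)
Your proposal is correct, and on the constructive side it is essentially the paper's argument: both rest on the sign-independence of the conservation law \eqref{angle_relation1}, the resulting two-agent formula \eqref{theta_c_two_agents}, and a sign analysis of the coefficients --- you work with the ratio $K_1/K_2=(\hat{\theta}_{10}-\hat{\theta}_c)/(\hat{\theta}_c-\hat{\theta}_{20})$, while the paper works with $\lambda_1=K_2/(K_1+K_2)$, $\lambda_2=K_1/(K_1+K_2)$ through three explicit sign cases and the inverse construction $K_1=\beta/c$, $K_2=-(1+\beta)/c$, which is your rescaling in disguise. Where you genuinely depart is the ``only if'' direction: you prove that $K_1+K_2<0$ is \emph{necessary} for synchronization from a non-synchronized initial condition, since \eqref{U_dot_two_agents} gives $\dot{U}$ a fixed sign equal to that of $K_1+K_2$, so $U$ cannot decrease to its minimum otherwise. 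The paper never argues this; its ``necessary condition'' is only the near-vacuous observation that admissible gains produce $\hat{\theta}_c\in[-\pi,\pi]$, and $K_1+K_2<0$ is established only as \emph{sufficient} in the discussion preceding the theorem. Your reading is the more faithful one to the ``iff'' as stated, and the Lyapunov necessity argument is a genuine addition. Your caveat about the endpoints $\hat{\theta}_c\in\{\hat{\theta}_{10},\hat{\theta}_{20}\}$ is also well taken and is not a gap in your proof but a defect of the theorem's closed-interval phrasing: the paper's own sufficiency construction permits $\beta=0$ (resp.\ $\gamma=0$), which forces $K_1=0$ (resp.\ $K_2=0$) in violation of the standing hypothesis $K_k\neq 0$, a degeneracy the paper silently ignores.
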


\begin{figure*}
\centerline{
\subfigure[]{\includegraphics[scale=0.38]{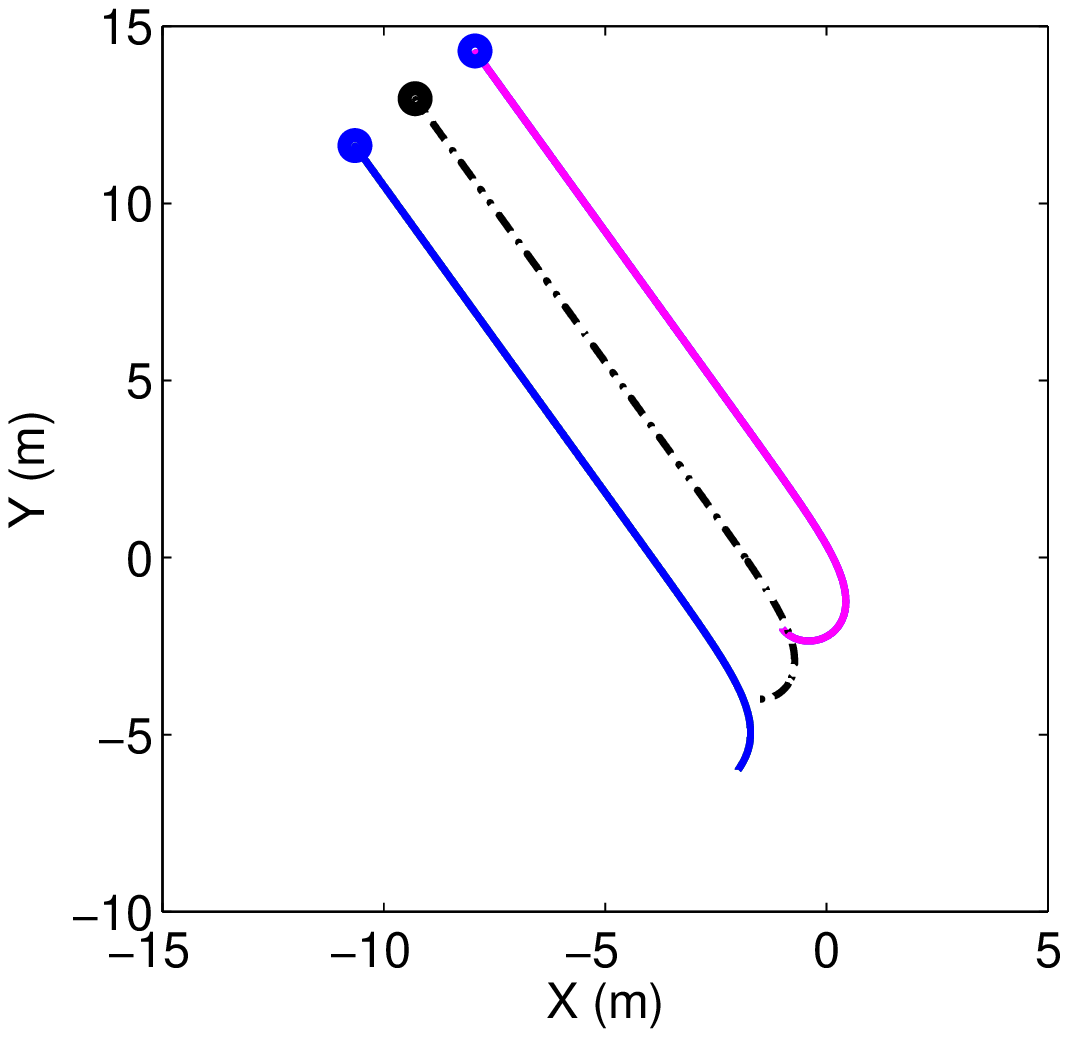}}\hspace{-0.5cm}
\subfigure[]
{\includegraphics[scale=0.38]{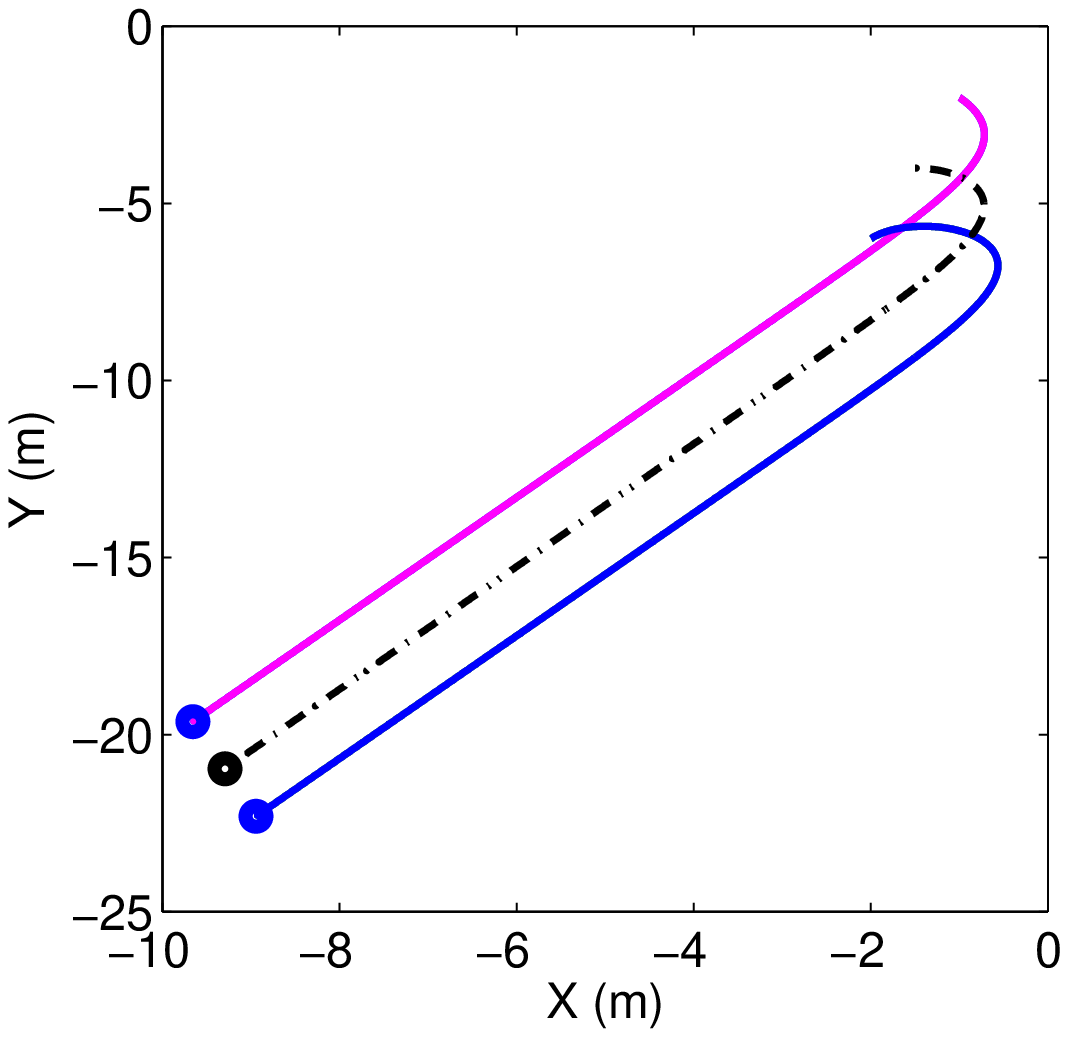}}\hspace{-0.5cm}
\subfigure[]
{\includegraphics[scale=0.38]{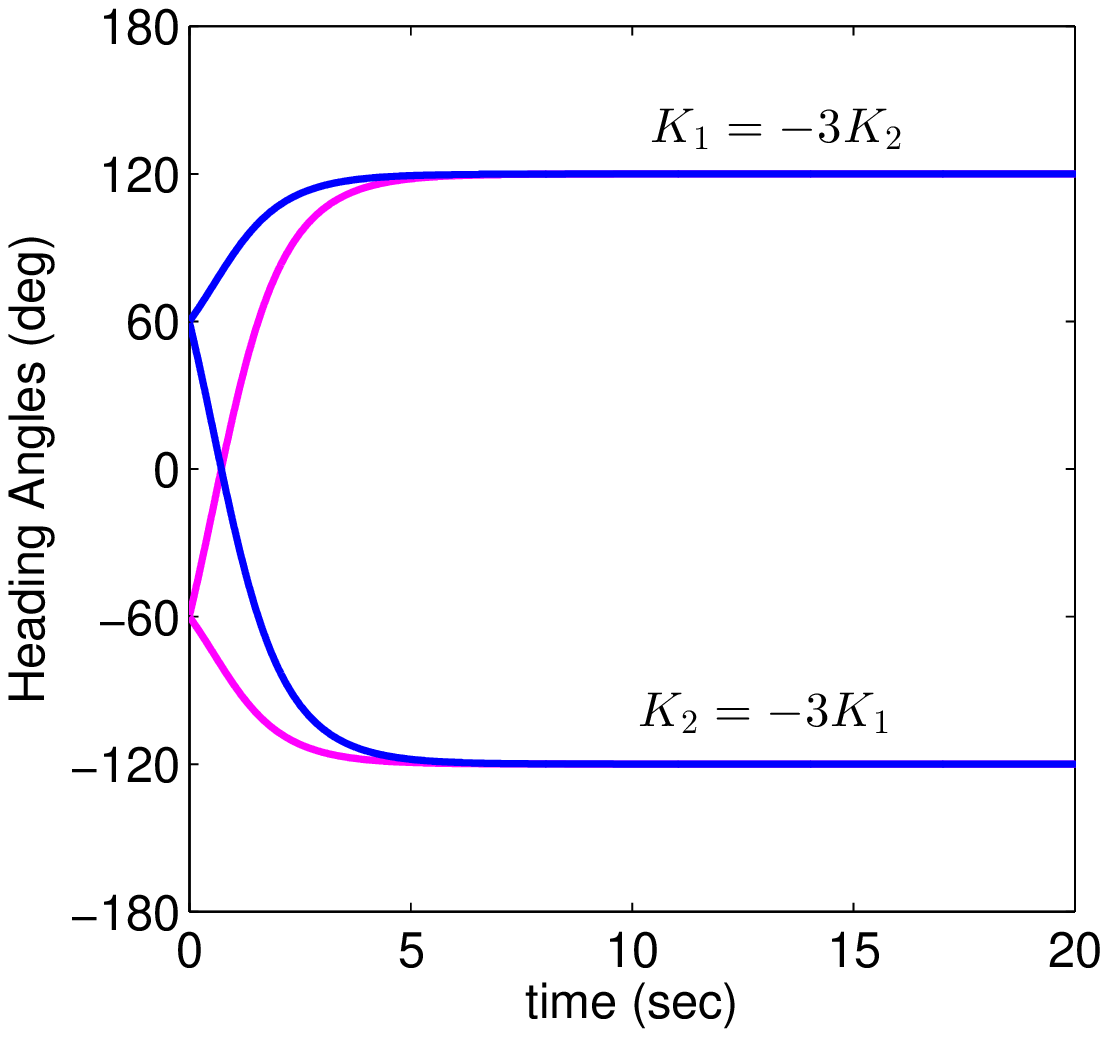}}}
\caption{Synchronization of $N = 2$ agents under the control law \eqref{control2}. $(a)$ Trajectories of the agents with $K_1 = -3K_2$. $(b)$ Trajectories of the agents with $K_2 = -3K_1$. $(c)$ Consensus of heading angles for the two sets of gains.}
\label{Synchronization_two_agents}
\end{figure*}

\begin{proof}
For $N=2$, the final velocity direction $\hat{\theta}_c$ of both the agents, by using \eqref{theta_c_new}, is given by
\begin{equation}
\label{theta_c_two_agents}\hat{\theta}_c =  \left(\dfrac{K_2}{K_1 + K_2}\right) \hat{\theta}_{10} + \left(\dfrac{K_1}{K_1 + K_2}\right) \hat{\theta}_{20}
\end{equation}

Substituting
\begin{equation}
\label{lamda1_2}\lambda_1 = \left(\dfrac{K_2}{K_1 + K_2}\right)~~ \text{and}~~\lambda_2 = \left(\dfrac{K_1}{K_1 + K_2}\right)
\end{equation}
in \eqref{theta_c_two_agents}, we get
\begin{equation}
\label{theta_c_two_agents_1}\hat{\theta}_c = \lambda_1\hat{\theta}_{10} + \lambda_2\hat{\theta}_{20}.
\end{equation}
Note that the parameters $\lambda_1$ and $\lambda_2$ satisfy $\lambda_1 + \lambda_2 = 1$. Without loss of generality, assume that $\hat{\theta}_{m0} = \hat{\theta}_{10}$ and $\hat{\theta}_{M0} = \hat{\theta}_{20}$. Now, depending upon the various choices of gains $K_1$ and $K_2$ satisfying $K_1 + K_2 < 0$, we consider the following three cases.

{\it Case~$1$:}
Let us assume that the gains $K_1 < 0$ and $K_2 < 0$. It implies that $\lambda_1 > 0$ and $\lambda_2 > 0$. In this situation, the proof directly follows from Theorem~\ref{Theorem4}, which ensures that $\hat{\theta}_c$ is reachable iff
\begin{equation}
\hat{\theta}_c \in (\hat{\theta}_{m0}, \hat{\theta}_{M0}).
\end{equation}

{\it Case~$2$:}
Assume that the gains $K_1 \geq 0$, $K_2 < 0$ and satisfy $K_1 + K_2 < 0$. It implies that $\lambda_1 > 0$ and $\lambda_2 \leq 0$. Thus, by using relation $\lambda_1 = 1 - \lambda_2$, \eqref{theta_c_two_agents_1} can be written as
\begin{equation}
\label{relation3_1}\hat{\theta}_c - \hat{\theta}_{10} = -\lambda_2(\hat{\theta}_{10} - \hat{\theta}_{20}).
\end{equation}
RHS (right-hand side) of \eqref{relation3_1} is non-positive, that is, $-\lambda_2(\hat{\theta}_{10} - \hat{\theta}_{20}) \leq 0$ since $\lambda_2 \leq 0$ and $\hat{\theta}_{10} < \hat{\theta}_{20}$ as per our assumption. Therefore, LHS (left-hand side) of \eqref{relation3_1} should also be non-positive, that is,
\begin{equation}
-\pi \leq \hat{\theta}_c \leq \hat{\theta}_{10}.
\end{equation}

{\it Case~$3$:}
Now, let us assume that the gains $K_1 < 0$, $K_2 \geq 0$ and satisfy $K_1 + K_2 < 0$. It implies that $\lambda_1 \leq 0$ and $\lambda_2 > 0$. Thus, by using relation $\lambda_2 = 1 - \lambda_1$, \eqref{theta_c_two_agents_1} can be written as
\begin{equation}
\label{relation4}\hat{\theta}_c - \hat{\theta}_{20} = \lambda_1(\hat{\theta}_{10} - \hat{\theta}_{20})
\end{equation}
RHS of \eqref{relation4} is non-negative, that is, $\lambda_1(\hat{\theta}_{10} - \hat{\theta}_{20}) \geq 0$ since $\lambda_1 \leq 0$ and $\hat{\theta}_{10} < \hat{\theta}_{20}$ as per our assumption. Therefore, LHS of \eqref{relation4} should also be non-negative, that is,
\begin{equation}
\hat{\theta}_{20} \leq \hat{\theta}_c \leq \pi.
\end{equation}

All the above cases lead to the conclusion that $\hat{\theta}_c \in [-\pi, \pi]$. This proves the necessary condition. To prove sufficiency condition for these two cases, we again consider the following cases.

{\it Case~$1$:}
Let $-\pi \leq \hat{\theta}_c \leq \hat{\theta}_{10}$ is reachable. Then according to \eqref{relation3_1}, the angular difference $\hat{\theta}_c - \hat{\theta}_{10}$ can be expressed as
\begin{equation}
\label{relation5}\hat{\theta}_c - \hat{\theta}_{10} = \beta(\hat{\theta}_{10} - \hat{\theta}_{20})
\end{equation}
where, $ \beta \geq 0$. Let us define $K_1 = \beta/c$ and $K_2 = -(1 + \beta)/c$, where $c > 0$ is a constant. Thus, $K_1 \geq 0$ and $K_2 < 0$ and satisfy $K_1 + K_2 = -({1}/{c})$.

Replacing $(1+\beta)$ and $\beta$ by $-cK_2$ and $cK_1$, respectively, in \eqref{relation5}, we get
\begin{equation}
\hat{\theta}_c = \left(\dfrac{K_2}{K_1 + K_2}\right)\hat{\theta}_{10} + \left(\dfrac{K_1}{K_1 + K_2}\right)\hat{\theta}_{20},
\end{equation}
which is the same as \eqref{theta_c_two_agents}.

{\it Case~$2$:} Let $ \hat{\theta}_{20} \leq \hat{\theta}_c \leq \pi$ is reachable. Then, according to \eqref{relation4}, the angular difference $\hat{\theta}_c - \theta_{20}$ can be expressed as
\begin{equation}
\label{relation6}\hat{\theta}_c - \hat{\theta}_{20} = -\gamma(\hat{\theta}_{10} - \hat{\theta}_{20})
\end{equation}
where, $ \gamma \geq 0$. Let us define $K_1 = -(1 + \gamma)/c$ and $K_2 = \gamma/c$, where $c > 0$ is a constant. Thus, $K_1 < 0$ and $K_2 \geq 0$ and again satisfy $K_1 + K_2 = -({1}/{c})$.

Replacing $\gamma$ and $(1+\gamma)$ by ${c}{K_2}$ and $-{c}{K_1}$, respectively in \eqref{relation6}, we again get \eqref{theta_c_two_agents}. These results imply that reachable set of final velocity directions further expands for $N=2$ when both positive and negative values of gains $K_1$ and $K_2$ satisfying $K_1 + K_2 < 0$ are selected. This completes the proof.
\end{proof}

{\it Simulation~2:}
In this simulation, we consider 2-agents with initial heading angles $\theta_{10} = -60^\circ$, and $\theta_{20} = 60^\circ$, and with randomly generated initial positions. By choosing a new coordinate system, one can easily compute from \eqref{theta_c_two_agents} that, if $K_1 = -3K_2$, the reachable velocity direction of both the agents in the standard coordinates is $\theta_c = 120^\circ$, and is shown in Figs.~$7(a)$ and $7(c)$, and if $K_2 = -3K_1$, $\theta_c = -120^\circ$, and is shown in Figs.~$7(b)$ and $7(c)$.

\section{Bounded Control Input}
In the previous sections, it has been assumed that the agents can use unbounded control input $u_k,\forall k$. However, in a practical scenario, autonomous vehicles, be it aerial, ground, or underwater, can develop only limited control force due to physical constraints. For example, one of the factors restricting the steering control for an unmanned aerial vehicle (UAV), is the bank angle of the aircraft. Since there is a finite limit to the degree to which a UAV can bank, the control force is bounded. To model this effect, a bound is placed on the turn rate $\dot{\theta}_k$, which can be done in the following two ways.

\subsection{Bounds on the controller gains $K_k$}
One of the ways to bound the control input is through the controller gains $K_k$. For example, one can observe from \eqref{control3} that
\begin{equation}
\label{u_bound}|\dot{\theta}_k| = |u_k| \leq \left(\frac{N-1}{N}\right)|K_k|
\end{equation}
since
\begin{equation}
-(N-1) \leq \sum_{\substack{j=1, \\ j \neq k}}^{N}\sin(\theta_j - \theta_k) \leq (N-1),
\end{equation}
for all $k = 1, \ldots, N$. Thus, the control input $u_k$ is bounded by the controller gain $K_k$. In order to bound the control input $u_k$ to a permissible limit, say
\begin{equation}
|u_k| \leq u_\text{max},
\end{equation}
where, $u_\text{max} > 0$ is the maximum allowable control for each agent, we can always choose the controller gain $K_k$ such that
\begin{equation}
\label{k_bound}|K_k| \leq \left(\frac{N}{N-1}\right)u_\text{max},
\end{equation}
$\forall~k$. From \eqref{u_bound} and \eqref{k_bound}, it follows that
\begin{equation}
|u_k| \leq \left(\frac{N-1}{N}\right)|K_k| \leq u_\text{max}.
\end{equation}
Thus, by bounding the controller gains $K_k$ according to \eqref{k_bound}, we can ensure that the control force $u_k$ does not violate the maximum allowable limit for the $k^\text{th}$ agent.

Note that, by selecting gains $K_k, \forall k$, appropriately, we can make the system of agents synchronize at a desired velocity direction with faster or slower convergence rates. Thus, by restricting the controller gains $K_k$ according to \eqref{k_bound}, convergence to synchronized formation may occur at a slower rate. To compensate for this, the control input may be bounded by the method described below.

\begin{figure*}
\centering
\subfigure[]{\includegraphics[scale=0.45]{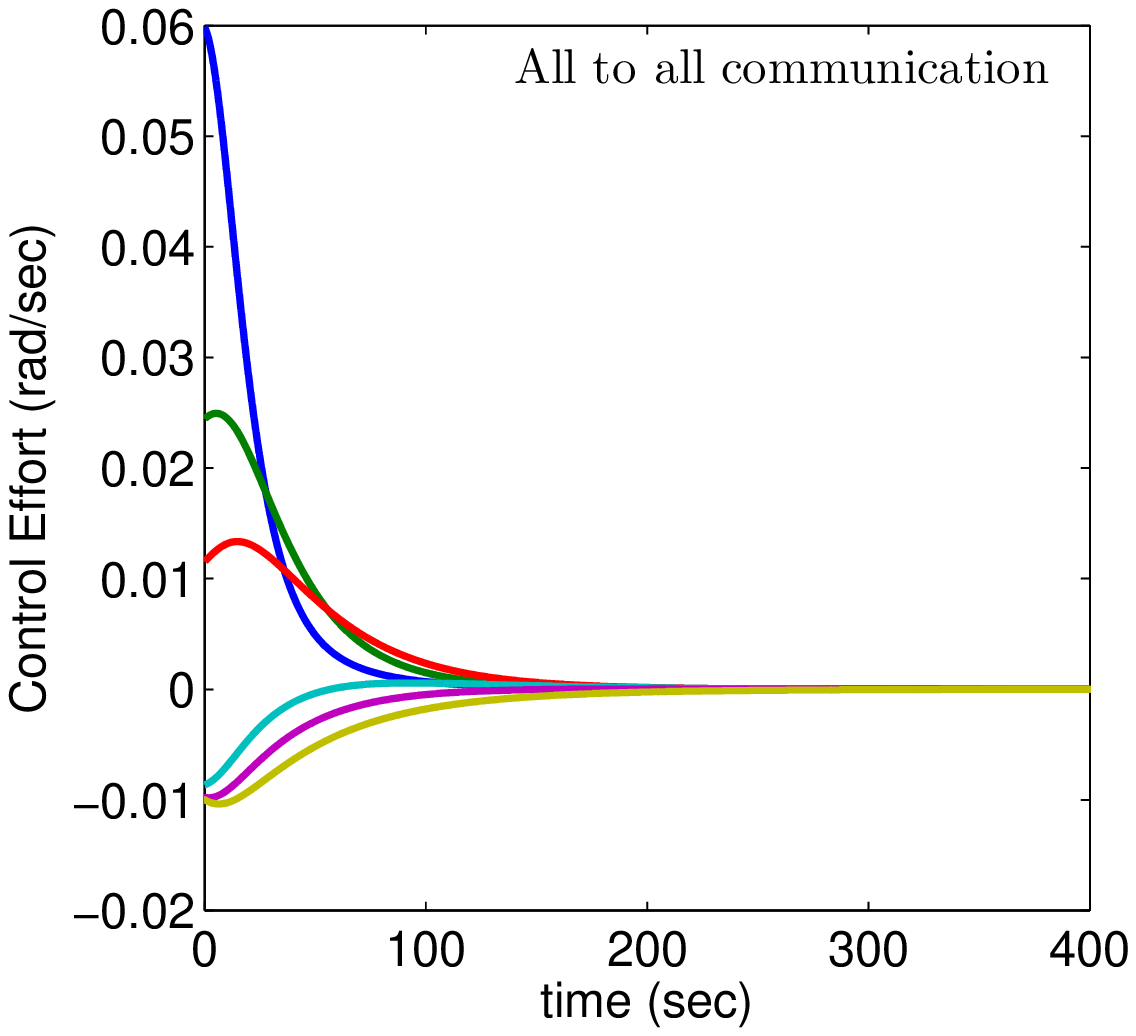}}\hspace{1cm}
\subfigure[]{\includegraphics[scale=0.45]{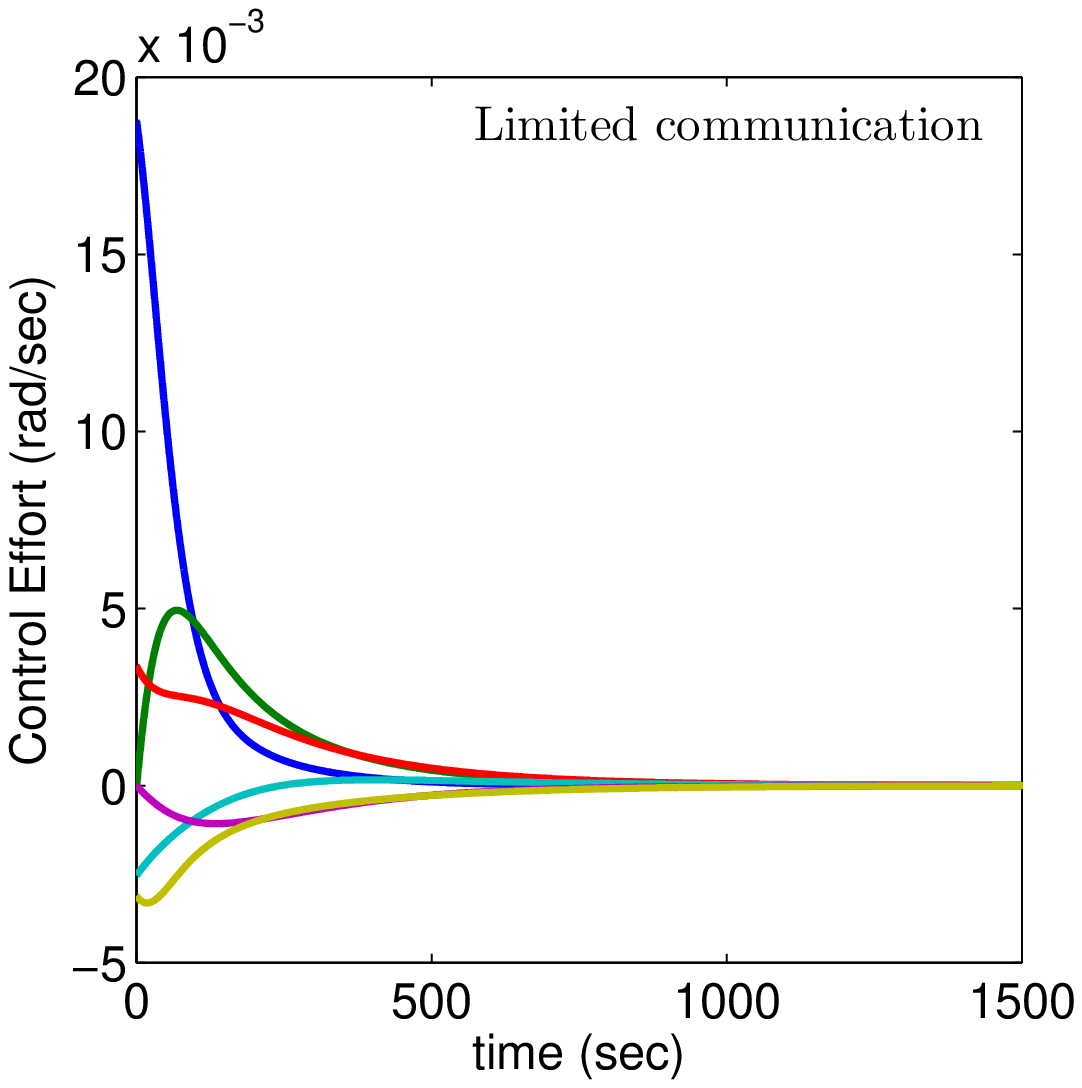}}\hspace{1cm}
\subfigure[]
{\includegraphics[scale=0.45]{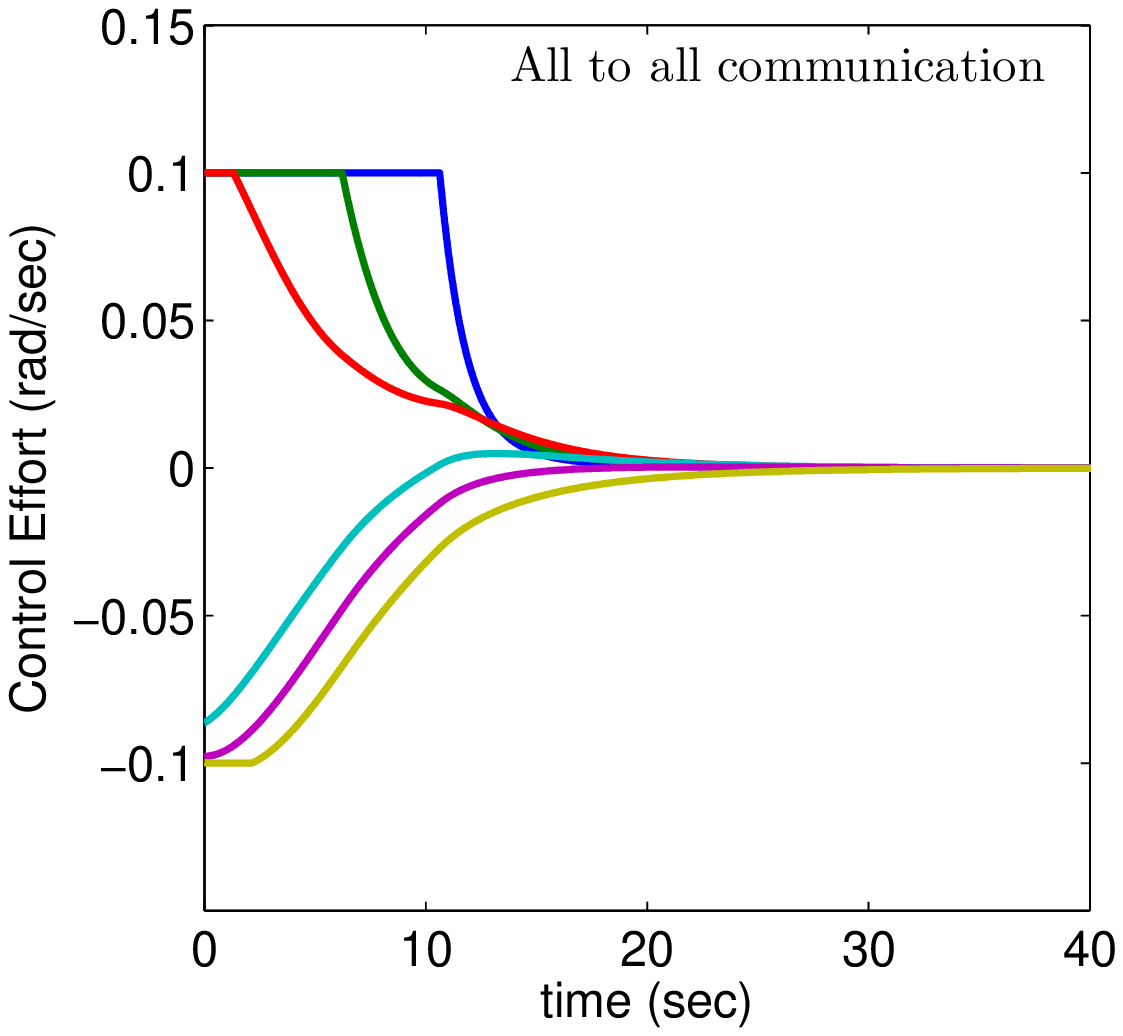}}\hspace{1cm}
\subfigure[]
{\includegraphics[scale=0.45]{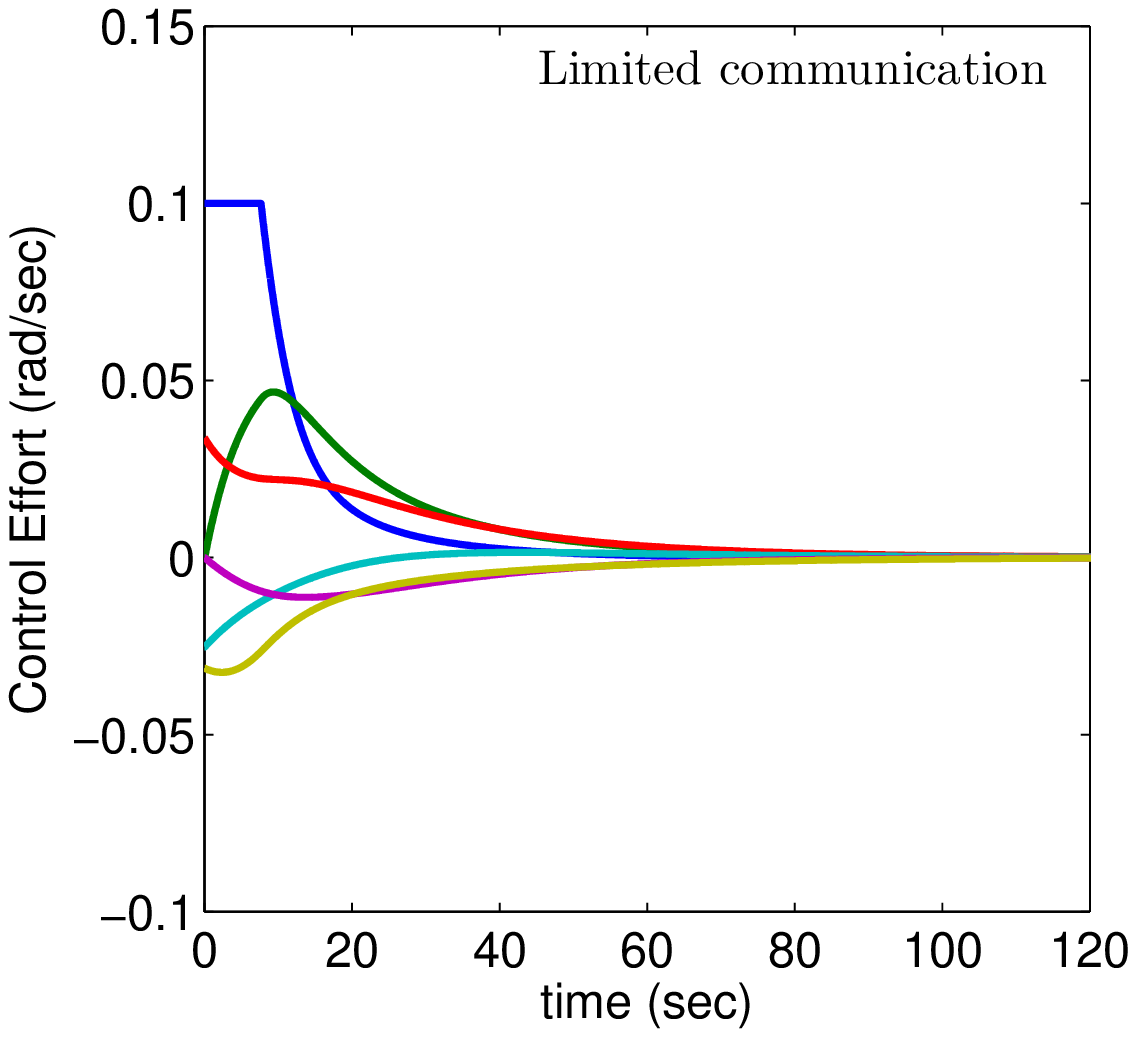}}
\caption{Control efforts of $N = 6$ agents in synchronized formation with a bound on the control effort given by $u_\text{max} = 0.1$. $(a)$ Control efforts given by \eqref{control3} with $K_\text{set4}$ under all-to-all communication. $(b)$ Control efforts given by \eqref{control5} with $K_\text{set4}$ under limited communication. $(c)$ Control efforts given by \eqref{u_saturation} with $K_\text{set2}$ under all-to-all communication. $(d)$ Control efforts given by \eqref{u_saturation} with $K_\text{set2}$ under limited communication.}
\label{bound on the control effort}
\end{figure*}

\subsection{Bounding $u_k$ by a saturation function}
Another method to bound the control input is by saturating $u_k$ for all $k$ according to the following saturation function \cite{Lee2001}:
\begin{equation}
\label{u_saturation}\dot{\theta}_k = \text{sat}(u_k;u_\text{max}) \triangleq
\begin{dcases}
    u_k , & \text{if}~|u_k| \leq u_\text{max}\\
    u_\text{max}~\text{sign}(u_k),  & \text{if}~|u_k| \geq u_\text{max}
\end{dcases}
\end{equation}
where, $\text{sign}(z)$ represents the signum function of $z$. Now, we state the following theorem which ensures stability of the synchronized formation under the control law \eqref{u_saturation}.

\begin{thm}\label{Theorem7}
Consider the system dynamics \eqref{modelNew} under the control law \eqref{u_saturation}, where, $u_k,~\forall k$, is given by \eqref{control1}. For $K_k < 0, \forall k$, all the agents asymptotically stabilize to a synchronized formation.
\end{thm}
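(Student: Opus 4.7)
The plan is to adapt the Lyapunov argument of Theorem~\ref{Theorem1} to the saturated dynamics, treating the saturation termwise so that non-positivity of $\dot{U}(\pmb{\theta})$ is preserved. I would again take $U(\pmb{\theta})=\tfrac{N}{2}(1-|p_\theta|^2)$ from \eqref{potential function}, which is non-negative, bounded, and minimized only at the synchronized configuration. Differentiating along \eqref{u_saturation} gives
\begin{equation*}
\dot{U}(\pmb{\theta})=\sum_{k=1}^{N}\left(\frac{\partial U}{\partial\theta_k}\right)\operatorname{sat}(u_k;u_{\max}),
\end{equation*}
and the first step is to inspect each summand separately, according to whether saturation is active or not.

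For a fixed $k$, if $|u_k|\leq u_{\max}$, then $\operatorname{sat}(u_k;u_{\max})=u_k=K_k(\partial U/\partial\theta_k)$, so the contribution equals $K_k(\partial U/\partial\theta_k)^2\leq 0$ because $K_k<0$. If instead $|u_k|>u_{\max}$, then $\operatorname{sat}(u_k;u_{\max})=u_{\max}\operatorname{sign}(u_k)$. Since $u_k=K_k(\partial U/\partial\theta_k)$ with $K_k<0$, we have $\operatorname{sign}(u_k)=-\operatorname{sign}(\partial U/\partial\theta_k)$, and the contribution becomes $-u_{\max}\,|\partial U/\partial\theta_k|\leq 0$. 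Hence every term is non-positive, giving $\dot{U}(\pmb{\theta})\leq 0$ everywhere, and equality forces $\partial U/\partial\theta_k=0$ for every $k$ (in both cases, the term vanishes only when the partial derivative does; note that in the saturated regime, $\partial U/\partial\theta_k=0$ would actually force $u_k=0$, which is consistent only with the unsaturated case, so the two analyses do not conflict).

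Next, because the saturation function is continuous and $\pmb{\theta}\in\mathbb{T}^N$ is compact, LaSalle's invariance principle applies: every trajectory converges to the largest invariant set inside $\{\dot{U}=0\}$, which is precisely the critical set $\Lambda=\{\pmb{\theta}\mid\partial U/\partial\theta_k=0,\;\forall k\}$ already analyzed in the proof of Theorem~\ref{Theorem1}. On $\Lambda$, both the original and the saturated dynamics agree (since $u_k=0$ implies no saturation), so the invariance of $\Lambda$ and the Hessian-based classification of its points into the synchronized global minimum and the unstable saddles/maxima carry over verbatim. Therefore, only the synchronized manifold is asymptotically stable under \eqref{u_saturation}, completing the argument.

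The main obstacle I anticipate is the case split at the saturation boundary: one must verify that the sign cancellation $\operatorname{sign}(u_k)\,\partial U/\partial\theta_k=|\partial U/\partial\theta_k|$ really needs $K_k<0$, which is what makes $K_k<0$ essential rather than merely sufficient for the unsaturated theorem. A secondary concern is that $\operatorname{sat}$ is only Lipschitz (not smooth), but since it is continuous, existence of solutions and LaSalle's theorem in its standard form still apply, so no Filippov-type extension is required.
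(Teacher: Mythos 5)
Your proof is correct and follows essentially the same route as the paper: the same Lyapunov function $U(\pmb{\theta})$, a case split on whether saturation is active to show each term of $\dot{U}$ is non-positive when $K_k<0$, then LaSalle's invariance principle to land on the critical set and the classification already carried out in Theorem~\ref{Theorem1}. If anything, your termwise treatment of the saturation (allowing different agents to be in different regimes) is slightly more careful than the paper's single case split over all $k$, but the argument is the same.
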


\begin{proof}
Consider the potential function $U(\pmb{\theta})$ defined by \eqref{potential function}. Under the control law \eqref{u_saturation}, the time derivative of $U(\pmb{\theta})$
along the system dynamics \eqref{modelNew} is
\begin{equation}
\label{U_dot_saturation}\dot{U}(\pmb{\theta}) = \sum_{k=1}^{N} \left(\frac{\partial U}{\partial \theta_k}\right)\text{sat}(u_k;u_\text{max}).
\end{equation}
Using \eqref{control1}, \eqref{U_dot_saturation} becomes
\begin{equation}
\label{U_dot_saturation_1}\dot{U}(\pmb{\theta}) = \sum_{k=1}^{N} \frac{u_k\text{sat}(u_k;u_\text{max})}{K_k}.
\end{equation}
Substituting for $\text{sat}(u_k;u_\text{max})$ from \eqref{u_saturation} in \eqref{U_dot_saturation_1}, yields
\begin{equation}
\label{U_dot_saturation_2}\dot{U}(\pmb{\theta}) =
\begin{dcases}
    \sum_{k=1}^{N} \frac{u^2_k}{K_k}, & \text{if}~|u_k| \leq u_\text{max}\\
    u_\text{max}\sum_{k=1}^{N}\frac{u_k\text{sign}(u_k)}{K_k},  & \text{if}~|u_k| \geq u_\text{max}.
\end{dcases}
\end{equation}
Since $u_k\text{sign}(u_k) \geq 0, \forall k$, the condition $K_k < 0, \forall k$ ensures that $\dot{U}(\pmb{\theta}) \leq 0$. According to the LaSalle's invariance principle \cite{Khalil2000}, all the solutions of \eqref{modelNew} under control \eqref{u_saturation} converge to the largest invariant set contained in $\{\dot{U}(\pmb{\theta}) = 0\}$, which is the set of points where $u_k = 0,~\forall k$. Since $u_k = 0,~\forall k$ defines the critical set of $U(\pmb{\theta})$ (see \eqref{control1}), all the solutions of dynamics \eqref{modelNew} under control \eqref{u_saturation} asymptotically stabilize to the synchronized formation (Theorem~\ref{Theorem1}). This completes the proof.
\end{proof}

\begin{thm}\label{Theorem8}
Let $L$ be the Laplacian of an undirected an connected graph $\mathcal{G} = (\mathcal{V}, \mathcal{E})$ with $N$ vertices. Consider the system dynamics \eqref{modelNew} under the control law \eqref{u_saturation}, where, $u_k,~\forall k$, is given by \eqref{control4}. For $K_k < 0, \forall k$, all the agents asymptotically stabilize to a synchronized formation.
\end{thm}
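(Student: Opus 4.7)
The plan is to mirror the proof of Theorem~\ref{Theorem7}, replacing the all-to-all potential $U(\pmb{\theta})$ by the Laplacian-based potential $W_L(\pmb{\theta})$ defined in \eqref{laplacian_potential}, and then borrow the critical-set characterization already carried out in Theorem~\ref{Theorem2}. Since the graph $\mathcal{G}$ is undirected and connected, $W_L(\pmb{\theta}) \geq 0$ with equality exactly on the synchronized set, so it is a valid Lyapunov candidate on the compact manifold $\mathbb{T}^N$.

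First I would differentiate $W_L(\pmb{\theta})$ along the dynamics \eqref{modelNew} driven by the saturated law \eqref{u_saturation}, obtaining
\begin{equation}
\dot{W}_L(\pmb{\theta}) = \sum_{k=1}^{N}\left(\frac{\partial W_L}{\partial \theta_k}\right)\text{sat}(u_k;u_\text{max}).
\end{equation}
Using \eqref{control4} to write $\partial W_L/\partial \theta_k = u_k/K_k$, this becomes
\begin{equation}
\dot{W}_L(\pmb{\theta}) = \sum_{k=1}^{N}\frac{u_k\,\text{sat}(u_k;u_\text{max})}{K_k},
\end{equation}
which splits into the two regimes $|u_k|\leq u_\text{max}$ and $|u_k|\geq u_\text{max}$ exactly as in \eqref{U_dot_saturation_2}. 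The key sign observation is that, regardless of the regime, $u_k\,\text{sat}(u_k;u_\text{max})\geq 0$ for every $k$, so $K_k<0$ for all $k$ forces each summand to be non-positive and hence $\dot{W}_L(\pmb{\theta})\leq 0$.

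Next I would apply LaSalle's invariance principle on the compact set $\mathbb{T}^N$: all trajectories converge to the largest invariant subset of $\{\dot{W}_L=0\}$. From the display above, $\dot{W}_L=0$ forces $u_k\,\text{sat}(u_k;u_\text{max})=0$ for every $k$, which in either regime is equivalent to $u_k=0$, i.e.\ $\partial W_L/\partial \theta_k = 0$ for all $k$. That is precisely the critical set of $W_L(\pmb{\theta})$ analyzed in Theorem~\ref{Theorem2}.

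Finally, to identify which critical points are actually stable I would invoke the critical-set analysis from Theorem~\ref{Theorem2} verbatim: the eigenvector of $L$ associated with the zero eigenvalue is $\pmb{1}$, corresponding to the synchronized configuration $e^{i\bar{\pmb{\theta}}}=e^{i\theta_c}\pmb{1}$, and every other eigenvector satisfies $\pmb{1}^T e^{i\bar{\pmb{\theta}}}=0$ and gives an unstable equilibrium. Therefore all solutions asymptotically stabilize to the synchronized formation. The main (and only) subtlety is verifying the sign of $u_k\,\text{sat}(u_k;u_\text{max})$ in the saturated regime; once that is in place, the argument reduces to reusing Theorems~\ref{Theorem2} and~\ref{Theorem7} with no new work.
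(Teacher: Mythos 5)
Your proposal is correct and follows essentially the same route as the paper, which simply replaces $U(\pmb{\theta})$ by $W_L(\pmb{\theta})$ in the proof of Theorem~\ref{Theorem7} and then invokes the critical-set analysis of Theorem~\ref{Theorem2}; your write-up just spells out the details that the paper leaves implicit.
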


\begin{proof}
Replacing $U(\pmb{\theta})$ by $W_L(\pmb{\theta})$ in the proof of Theorem~\ref{Theorem7} and then proceeding in the same way as above, we get the required result by using Theorem~\ref{Theorem2}.
\end{proof}

{\it Simulation~3:} In this simulation, we consider the same $6$ agents of Simulation~1. Let us assume that $u_\text{max} = 0.1$. At first, we obtain synchronization of all the agents as shown in Figs.~$8(a)$ and $8(b)$ under both types of communication scenarios for a set of gains $K_\text{set4} = \{K_k = -0.1/k,~k = 1, \ldots, 6.\}$, where all the gains $|K_k|, \forall k=1, \ldots, 6$, are bounded below by $u_\text{max} = 0.1$. On the other hand, by saturating control efforts according to \eqref{u_saturation} for all $k = 1, \ldots, 6$, synchronization of all the agents for the set of gains $K_\text{set2}$ is shown in Fig.~$8(c)$ and $8(d)$ under both types of communication scenarios. Note that the synchronization is faster for $K_\text{set2}$ as well as for all to all interaction, as desired. Nevertheless the convergence for the control \eqref{u_saturation} is faster, we cannot assure synchronization at a desired velocity direction, since \eqref{u_saturation} does not results in \eqref{theta_c_new}. Thus, in this situation, a desired final velocity direction can be obtained only by bounding the heterogeneous controller gains according to \eqref{u_bound} but at a slower convergence rate compared to \eqref{u_saturation}.

\section{Conclusions}
In this paper, we have investigated the phenomenon of synchronization for a group of heterogeneously coupled agents. It has been shown that a desired final velocity direction of the agents in synchronization can be achieved by appropriately selecting the heterogeneous controller gains $K_k$ satisfying $K_k < 0, \forall k$. Moreover, it has been illustrated through simulation that the reachable set of the final velocity direction further expands when both positive and negative values of the heterogeneous gains are incorporated in the control scheme. In particular, it has been proved analytically for $N=2$ that there exists a condition on the heterogeneous controller gains which allows them to assume both positive and negative values, and hence results in further expansion of the reachable set of the final velocity direction. We have further discussed the synchronization of realistic systems where an upper bound on the control force, applied to each agent, was obtained either by bounding the heterogeneous controller gains or by directly saturating the control efforts. In a nutshell, the idea of introducing heterogeneity in the controller gains works well in choosing a desired final velocity direction in the synchronized formation of the agents, and hence is more useful in various practical applications. A possible future work is to explore the behavior of the system under heterogeneous controller gains with time-varying interaction among agents.

\section{Acknowledgements}
This work is partially supported by Asian Office of Aerospace Research and Development (AOARD).



\end{document}